\def\endthebibliography{% 
    \def\@noitemerr{\@latex@warning{Empty `thebibliography' environment}}%
    \endlist
}
\newcolumntype{L}[1]{>{\raggedright\let\newline\\\arraybackslash\hspace{0pt}}m{#1}}
\newcolumntype{C}[1]{>{\centering\let\newline\\\arraybackslash\hspace{0pt}}m{#1}}
\newcolumntype{R}[1]{>{\raggedleft\let\newline\\\arraybackslash\hspace{0pt}}m{#1}}
\newlength{\maxwidth}
\newcommand{\multiline}[1]{%
	\begin{tabularx}{\dimexpr\linewidth-\ALG@thistlm}[t]{@{}X@{}}
		#1
	\end{tabularx}
}
\newtheorem{theorem}{Theorem}
\theoremstyle{remark}
\newtheorem{remark}{Remark}
\DeclareMathOperator*{\argmax}{\arg\max}
\begin{document}

\title{{Fluid-Active-RIS (FARIS): A New Smart-Radio Paradigm with Joint Port Selection and\\Active Reflection Design}}

\author{Hong-Bae Jeon,~\IEEEmembership{Member,~IEEE}%, Kai-Kit Wong,~\IEEEmembership{Fellow,~IEEE}, and Chan-Byoung Chae,~\IEEEmembership{Fellow,~IEEE}%
\thanks{This work was supported by Hankuk University of Foreign Studies Research Fund of 2026. \textit{(Corresponding Author: Hong-Bae Jeon.)}}%
\thanks{H.-B. Jeon is with the Department of Information Communications Engineering, Hankuk University of Foreign Studies, Yong-in, 17035, Korea (e-mail: hongbae08@hufs.ac.kr).}%
%	\thanks{K.-K. Wong is with the Department of Electronic and Electrical Engineering, University College London, WC1E 6BT London, U.K., and also with the Yonsei Frontier Laboratory, Yonsei University, Seoul 03722, Korea (e-mail: kai-kit.wong@ucl.ac.uk).}
%\thanks{C.-B. Chae is with the School of Integrated Technology, Yonsei University, Seoul 03722, Korea (e-mail: cbchae@yonsei.ac.kr).}%
}

\maketitle

\begin{abstract}
In this paper, we introduce a new wireless paradigm termed \textit{fluid-active reconfigurable intelligent surface (FARIS)} that combines fluid-based port repositioning with per-element active amplification to enhance the performance of 6G networks. To realistically characterize the hardware operation, we first develop a circuit-level abstraction of the FARIS architecture and establish a practical power consumption model that captures both the logical control/switching power of candidate ports and the direct current (DC) bias power required for active reflection. Based on this model, we establish the FARIS signal model and formulate a corresponding ergodic-rate maximization problem that jointly optimizes the active amplification-reflection vector and the discrete selection of fluid-active elements under practical hardware constraints. The problem is addressed via an alternating optimization (AO) framework, which progressively improves the rate. Complexity and convergence analyses that follow furnish deeper insight into the algorithmic operation and performance enhancement. Numerical results confirm that the proposed FARIS with the AO framework consistently outperforms conventional baselines, delivering higher rates across diverse environments, often even when using fewer active elements or a smaller physical aperture.
\end{abstract}

\begin{IEEEkeywords}
Fluid active reconfigurable intelligent surface (FARIS), ergodic rate, alternating optimization.
\end{IEEEkeywords}

\IEEEpeerreviewmaketitle
\section{Introduction}
\label{sec:intro}
\lettrine{R}{ecently}, reconfigurable intelligent surfaces (RISs) have emerged as a key technology for sixth-generation (6G) wireless networks, primarily due to their capability to passively reconfigure the radio environment with minimal power consumption~\cite{RIST, RISnearmag}. An RIS typically comprises a large array of programmable, low-cost metasurface elements whose tunable reflection characteristics enable fine-grained manipulation of electromagnetic waves~\cite{holotut}. By coordinating the phase shifts across these elements, RISs can intentionally steer, focus, or scatter incident signals, thereby improving coverage, link reliability, spectral efficiency, and even localization performance across diverse wireless scenarios~\cite{dsRIS, nfRIS, HBRIS, HBRIS22}. Although RIS is recognized for its cost- and energy-efficiency in boosting wireless coverage and rate, its practical deployment remains challenging. One critical issue is that since conventional RIS architectures rely solely on passive reflecting elements, it inevitably suffers from severe multiplicative fading, as their end-to-end path loss (PL) is the product of the Tx-RIS and RIS-Rx channel losses. This fundamental limitation significantly degrades performance in practical deployments~\cite{DF, stardf}. 

To address this fundamental physical bottleneck, recent research has introduced active-RIS (ARIS) as a new architecture for wireless systems~\cite{aristut, aris1}. Unlike conventional passive-RIS (PRIS) that merely reflects incident signals, ARIS incorporates reflection-type amplifiers into their elements, enabling them to amplify the reflected signals. Although this requires additional power consumption, the resulting gain can effectively compensate for the severe double-reflection path loss, making ARIS a promising solution to the ``multiplicative fading'' problem. Several studies have further demonstrated the performance gains of ARIS in various wireless settings, including joint beamforming optimization with sub-connected architectures~\cite{aris2}, signal-to-noise-ratio (SNR)-oriented analyses under identical power budget with PRIS~\cite{aris4, aris5}, and average sum-rate maximization~\cite{aris8} and power-minimizing~\cite{aris7} frameworks under partial channel-state-information (CSI) conditions, highlighting its potential to fundamentally enhance coverage, reliability, and spectral efficiency.

Another critical issue is that conventional RIS architectures rely on a fixed and finitely quantized phase-control structure, which fundamentally limits their ability to realize the truly smart and highly adaptive wireless environment they were originally envisioned for~\cite{renzosmart}. As a result, practical RIS implementations often fall short of achieving environment-level reconfigurability, undermining the very motivation of deploying low-cost surfaces to offload complexity from base stations (BSs)~\cite{alexsmartris}. Furthermore, this fixed structure is also unfavorable for achieving high degree-of-freedom (DoF) in RIS-aided channels~\cite{RISvtm, nomaris}, as substantial DoF gains generally necessitate scaling up the RIS~\cite{ristit, jsacdofris, irris}, inevitably leading to increased training overhead and greater implementation complexity~\cite{rismeasure, rismeasure22}. Thus, the rigid structure of conventional RIS not only limits its adaptability to smart environments but also fundamentally restricts scalable DoF enhancement, which additionally limits its ability to effectively mitigate the aforementioned multiplicative-fading effect.

In this context, the concept of fluid antenna system (FAS) has emerged as a compelling solution~\cite{FAS, fluidtut}. Unlike conventional fixed-structure antennas, FAS offers shape- and position-reconfigurability enabled by flexible materials and architectures~\cite{metaant, pixfas}, thereby introducing additional physical-layer DoFs that can be exploited for performance enhancement~\cite{fasopdg}. Since the FAS is first introduced~\cite{FAS}, several studies have been made to understand the various performance limits of FAS~\cite{fasopdg, perlimfas, FASqout}, enable channel estimation schemes for FAS~\cite{fasmm, fasover}, and apply with several other applications including RIS~\cite{FASRIS, fasrisper}, integrated-sensing-and-communications (ISAC)~\cite{fasisac, fasisac22}, and direction-of-arrival (DoA) estimations~\cite{fasdoa, fasdoa22}. Taken together, these studies establish FAS as a promising paradigm that surpasses conventional antenna systems by unlocking higher spatial diversity, and by offering a versatile platform that can be seamlessly integrated into diverse 6G wireless functionalities.

Inspired by the position-reconfigurability of FAS, recent work has extended this concept to RIS through the notion of fluid-RIS (FRIS)~\cite{FRISlook, FRISmag}, where the RIS hosts a movable ``fluid'' elements capable of dynamically changing its position and applying optimized phase shifts~\cite{FRISsec}. By allowing a small number of mobile elements to traverse a larger physical aperture, FRIS effectively extracts spatial-domain DoFs without increasing the number of reflecting units~\cite{FRISpa}, offering a fundamentally different path toward scalable DoF enhancement under practical hardware constraints~\cite{FRISonoff}. While FRIS achieves meaningful performance gains through spatial-domain mobility, its reliance on passive reflection still leaves substantial headroom for further enhancement.

From this observation, we conceive that the advantages of spatial mobility and active reflection on RIS are not merely additive but fundamentally complementary. On the one hand, ARIS can compensate for severe cascaded path loss through signal amplification~\cite{aris1}; however, its amplification occurs over a fixed surface geometry, limiting the ability to concentrate amplification on spatially favorable locations. On the other hand, FRIS can adaptively select advantageous spatial positions across a larger aperture~\cite{FRISlook}, but passive reflection alone cannot sufficiently overcome the inherent multiplicative-fading penalty~\cite{DF}. Therefore, the key opportunity lies in \textit{a deeper integration where each selected element simultaneously possesses spatial mobility and controllable amplification.} In such a design, the surface can jointly optimize \textbf{where} the signal interaction occurs and \textbf{how strongly} the signal is reinforced, thereby creating a coupled spatial-amplitude design space that does not exist in conventional FRIS or static ARIS architectures. Importantly, such an architecture should not be interpreted as a simple system-level stacking of FRIS and ARIS functionalities. Rather, it requires a fundamentally new metasurface structure in which each reflecting element operates as a fluidic active unit, capable of dynamically repositioning within its designated subregion while simultaneously applying controllable amplification and phase adjustment. Through this intrinsic integration, spatial selection and signal reinforcement become jointly controllable design variables, enabling the surface to adapt both the spatial distribution and strength of reflected energy.

Motivated by this insight, we propose a \emph{fluid-active reconfigurable intelligent surface (FARIS)}, a new RIS paradigm that tightly integrates the spatial reconfigurability of FRIS with the signal amplification capability of ARIS. Unlike conventional FRIS or static ARIS architectures, FARIS equips each selected element with fluidic mobility together with active reflection functionality, allowing it to amplify and reflect incident signals while freely adjusting its position within the available aperture. This dual capability enables FARIS to simultaneously exploit position-adaptive spatial degrees of freedom and mitigate multiplicative fading through controllable active gain. As a result, FARIS provides a significantly stronger level of environment control compared with existing RIS architectures, offering a versatile platform for future 6G wireless networks. The main contributions are summarized as follows:

%This naturally raises the question of whether combining FRIS with the signal-amplification capability of ARIS could unlock even greater benefits. Importantly, however, our proposed architecture is not a mere system-level stacking of ARIS and FRIS mechanisms. Rather, it introduces a fundamentally new form of metasurface in which each active reflecting element functions as a fluidic unit, capable of providing controllable amplification while dynamically repositioning within its assigned subregion. In other words, this fluid-active behavior is intrinsically integrated into the surface itself.
%
%In particular, we introduce fluid-active-RIS (FARIS), a novel paradigm that integrates the signal-amplification functionality of ARIS with the spatial reconfigurability of FRIS. Unlike FRIS and static ARIS, FARIS equips the surface with a fluidic-active-element that can simultaneously amplify and reflect incoming signals while freely adjusting its position within a predefined aperture. This dual capability allows the surface to adapt its spatial energy distribution in real time, thereby enabling (i) position-adaptive spatial-DoF extraction and (ii) compensation of multiplicative fading through active gain. As a result, FARIS achieves a fundamentally stronger form of environment control, offering a powerful and versatile platform for 6G wireless network.
\begin{enumerate}
%\item We introduce the FARIS framework for a downlink setting and develop a unified formulation for joint port selection and amplification-phase design. By explicitly modeling the spatial correlation among selected FARIS ports and incorporating a practical ARIS reflection-power constraint, we cast an ergodic-rate maximization problem that simultaneously optimizes the amplification-reflection vector and the discrete configuration of fluid-reconfigurable ports.
\item We introduce a newly proposed FARIS framework for a downlink setting and develop a unified formulation for joint port selection and amplification-phase design. By explicitly modeling the spatial correlation among selected FARIS ports and incorporating a practical ARIS reflection-power constraint, we cast an ergodic-rate maximization problem that simultaneously optimizes the amplification-reflection vector and the discrete configuration of fluid-reconfigurable ports. In addition, based on a circuit-level modeling of the FARIS architecture, we establish a practical power consumption model that captures both the control-circuit power and the direct current (DC) bias power of the active reflection modules, which enables a realistic evaluation of the overall power consumption.
\item To address the mixed-integer non-convex nature of the problem, we develop an alternating-optimization (AO) framework for design on FARIS architecture. With the port configuration fixed, we employ sample average approximation (SAA) together with a quadratic fractional transform to convert the objective into a sequence of convex conic subproblems in a lifted matrix variable. When the resulting solution is not rank-one, Gaussian randomization and magnitude projection are applied. We additionally establish that this inner-loop procedure guarantees monotonic improvement of the SAA objective and convergence to a stationary point. For discrete port selection under a cardinality constraint on the selectable FARIS elements, given the current amplification-reflection vector, we propose a cardinality-constrained cross-entropy method (CEM) that draws Bernoulli selection samples and updates an independent Bernoulli distribution to fit the statistics of the high-rate elite subset. We derive a closed-form CEM update, showing that the optimal selection probabilities are parametrized by a single Lagrange multiplier and that the resulting mapping admits a unique solution enforcing the cardinality budget. The smoothed update is proven to monotonically increase the empirical log-likelihood, and yields a non-decreasing sequence of ergodic rates.
\item Building on these, we integrate the active-reflection design and the port selection into a unified AO framework. We prove that each outer iteration produces a non-decreasing sequence with finite limit by leveraging the monotonicity of the subproblems in AO framework. We conduct a detailed complexity analysis of each algorithmic block, where the resulting expressions characterize how the overall computational cost scales, thereby confirming the polynomial-time implementability of the FARIS with proposed AO framework.
\item Extensive simulations are performed over a wide range of system parameters, including transmit power, total number of FARIS elements and candidates, and normalized aperture. The results show that the proposed FARIS architecture with the AO framework exhibits fast convergence within a modest number of outer AO iterations and achieves near-optimal ergodic rate with only marginal loss compared to brute-force search (BFS), while drastically reducing computational burden. Moreover, FARIS consistently outperforms conventional baselines, achieving significantly higher rates even with fewer active elements or smaller apertures. In addition, the power consumption analysis demonstrates that FARIS requires only a slightly higher power than ARIS due to the candidate port control, while providing substantially improved rate performance. Moreover, compared to the amplify-and-forward (AF) relay architecture, FARIS avoids the excessive hardware power consumption associated with complex architectures. As a result, FARIS achieves a more favorable rate-power trade-off among the considered architectures.
\end{enumerate}
\textit{To the best of our knowledge, this is the first work to formalize FARIS as a new RIS paradigm that intrinsically couples fluid port reconfigurability with active reflection within a unified architecture. Supported by rigorous algorithmic design, detailed complexity characterization, and extensive simulations, our results show that FARIS is not merely a hybrid extension of existing RIS concepts, but a practically implementable and performance-enhancing paradigm that achieves near-optimal operation while consistently outperforming conventional FRIS and ARIS benchmarks.}

\begin{figure}[t]
	\begin{center}
		\includegraphics[width=0.8\columnwidth,keepaspectratio]%
		{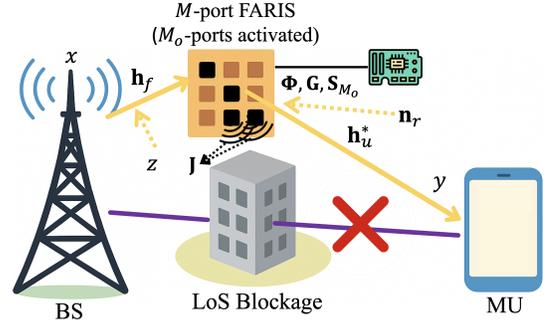}
		\caption{System model of FARIS-aided wireless network.}
		\label{fig_sm}
	\end{center}
\end{figure}
\section{System Model}
\label{sec:sys}
As illustrated in Fig.~\ref{fig_sm}, we examine a downlink wireless system assisted by FARIS. A BS equipped with a single fixed-position antenna (FPA) communicates with a mobile user (MU), also equipped with a single FPA, through the FARIS. The direct BS-MU link is assumed to be completely blocked by obstacles such as buildings. The FARIS consists of $M = M_x \times M_x$ reflective elements arranged over a surface of size $W_x\lambda \times W_x\lambda$, where $\lambda$ denotes the carrier wavelength and $W_x$ represents the normalized aperture of the FARIS relative to $\lambda$. The inter-element spacing is $d = \frac{W_x\lambda}{M_x}$, resulting in a spatial correlation matrix $\mathbf{J} \triangleq [J_{ij}] \in \mathbb{R}^{M \times M}$ modeled according to Jakes’ model. The spatial correlation between the $i$th and $j$th elements is given by~\cite{FRISonoff, FRISpa}
\begin{equation}
\label{corr}
J_{ij} = j_0\left( \frac{2\pi d_{ij}}{\lambda} \right),
\end{equation}
where $j_0(\cdot)$ denotes the zero-order spherical Bessel function of the first kind, and $d_{ij}=d\sqrt{ \left(\operatorname{mod}(i,M_x)-\operatorname{mod}(j,M_x)\right)^2+\left(\left\lfloor\frac{i}{M_x}\right\rfloor - \left\lfloor\frac{j}{M_x}\right\rfloor\right)^2}$ is the distance between the $i$ and $j$th elements. Following~\cite{FRISonoff, FRISmag}\footnote{Although the cited model is for FRIS, we adopt it here since FARIS follows the same hardware architecture, with the difference being the additional amplification module for each candidate.} and differing from~\cite{FRISlook}, each FARIS element can be viewed as a port of a fluid antenna structure. Each element operates in one of two states: ``on'' and ``off’’; In the ``on’’ state, the element is connected to the active reflection branch and actively interacts with the incident electromagnetic wave. Specifically, the impinging signal is processed through the controllable reflection circuitry, where its amplitude and phase can be adjusted according to the configured reflection coefficient. This enables the element to participate in the FARIS reflection process and contribute to shaping and amplifiying the composite reflected wave toward the desired direction. In contrast, in the ``off’’ state, the element is terminated by a matched load through the switching network. In this case, the incident electromagnetic energy is absorbed rather than re-radiated, effectively isolating the element from the reflection process and preventing any modification of the impinging signal. Consequently, the element does not contribute to the reflected field and only incurs the baseline control-circuit power consumption. The schematic illustration of the FARIS architecture is given in Fig.~\ref{fig_sche}, which serves as the basis for the power consumption model developed in the subsequent analysis.
\begin{figure}[t]
	\begin{center}
		\includegraphics[width=0.95\columnwidth,keepaspectratio]%
		{fig/fasche}
		\caption{Schematic illustration of the FARIS architecture with incorporating on-off control circuits and reflection-type amplifiers for the selected elements.}
		\label{fig_sche}
	\end{center}
\end{figure}

%Each element operates in one of two states: ``on'' and ``off''. In the ``on'' state, the element actively interacts with incoming electromagnetic waves, modifying their amplitude and phase to realize the desired reflection. In the ``off'' state, the element is terminated by a matched load, effectively isolating it from the impinging wave and preventing any alteration of the signal.

The BS-FARIS link is assumed line-of-sight (LoS); $\tilde{\mathbf h}_f \in \mathbb C^M$, where as noted in~\cite{nearris}, it occurs when the BS-FRIS distance is shorter than the advanced MIMO Rayleigh distance (MIMO-ARD), represented as $D^{\text{RB}} \triangleq \lambda M$. The FARIS-MU link experiences Rician fading:
\begin{equation}
\label{eq:hu}
\mathbf{h}_u = \sqrt{\frac{K}{K+1}} \tilde{\mathbf{h}}_u + \sqrt{\frac{1}{K+1}} \hat{\mathbf{h}}_u\in\mathbb{C}^{M\times 1},
\end{equation}
where $\tilde{\mathbf h}_u$ and $\hat{\mathbf{h}}_u\sim \mathcal {CN}(0, \mathbf I_M)$ are the deterministic LoS and the non-LoS (NLoS) component, respectively, where $\mathbf I_M$ is an $M\times M$ identity matrix.

Out of the total $M$ candidate ports, only $M_o$ are selected. This selection process is represented as
\begin{equation}
\label{eq:Sm}
\mathbf{S}_{M_o}^{\mathrm{T}} \triangleq [\mathbf{e}_{i_1}  \cdots \mathbf{e}_{i_{M_o}}] \in \mathbb{R}^{M \times M_o}~(\forall i_n \in \{1, \cdots, M\}),
\end{equation}
where $\mathbf{e}_i$ denotes the $i$th canonical basis vector (i.e., the $i$th column of $\mathbf{I}_M$). This construction implies that only the ports indexed by $\{i_n\}_{n=1}^{M_o}$ are active on the FARIS. To ensure unique selection, all columns of $\mathbf{S}_{M_o}^{\mathrm{T}}$ must be mutually distinct, preventing redundant port selection.

The functionality of $\mathbf{S}_{M_o}$-utilized FARIS ports can be represented by a phase-shift $\boldsymbol\Phi$ and amplifier matrix $\mathbf G$:
\begin{equation}
\begin{aligned}
\label{eq:phiG}
&\boldsymbol \Phi = \mathrm{diag}\left(e^{j\phi_1},\cdots,e^{j\phi_{M_o}}\right)~(\forall \phi_i \in [0,2\pi)), \\
&\mathbf G = \mathrm{diag}\left(g_1,\cdots,g_{M_o}\right)~(\forall g_i \in [0, g_{\max}]).
\end{aligned}
\end{equation}
The effective operator $\mathbf A_{\mathrm{act}}$ of FARIS is therefore
\begin{equation}
\label{eq:Aact}
\mathbf A_{\mathrm{act}} \triangleq \mathbf J^{\frac{1}{2}}\mathbf S_{M_o}^{\mathrm T} \mathrm{diag}(\mathbf v)\mathbf S_{M_o}\mathbf J^{\frac{1}{2}},
\end{equation}
where
\begin{equation}
\label{defvv}
\mathbf v= [v_1\cdots v_{M_o}]^{\mathrm T},~ v_i = g_i e^{j\phi_i}\leftrightarrow \mathrm{diag}(\mathbf v)=\boldsymbol \Phi \mathbf G
\end{equation}
denotes the concatenated amplification ($\{g_i\}$) and reflection ($\{\phi_i\}$) mechanism on FARIS, where the magnitude and phase of $v_i$ correspond each, respectively.

The BS transmits a unit-power symbol $x$ with power $P$, and noise at the FARIS and the MU are respectively given by $\mathbf n_r \sim \mathcal{CN}(\mathbf 0,\sigma_r^2\mathbf I_M), ~ z \sim \mathcal{CN}(0,\sigma_0^2)$. In summary, the signal $\mathbf s_{\mathrm{RIS}}$ radiated by FARIS is
\begin{equation}
\label{eq:sRIS}
\mathbf s_{\mathrm{RIS}} = \mathbf A_{\mathrm{act}}\left(\sqrt{P L_f}\tilde{\mathbf h}_f x + \mathbf n_r\right),
\end{equation}
where $L_f$ is the PL of the BS-FARIS link.
%The instantaneous radiated power $P_{\mathrm{RIS}}$ at FARIS is therefore
%\begin{equation}
%\label{eq:Pris}
%P_{\mathrm{RIS}} = P L_f \|\mathbf A_{\mathrm{act}}\tilde{\mathbf h}_f\|_2^2+ \sigma_r^2\mathrm{tr}\left(\mathbf A_{\mathrm{act}}\mathbf A_{\mathrm{act}}^*\right),
%\end{equation}
%XXXXX
%
%XXXXX
{The instantaneous radiated power $P_{\mathrm{RIS}}$ at FARIS is therefore
\begin{equation}
\label{eq:Pris}
P_{\mathrm{RIS}} = P L_f \|\mathbf A_{\mathrm{act}}\tilde{\mathbf h}_f\|_2^2+ \sigma_r^2\mathrm{tr}\left(\mathbf A_{\mathrm{act}}\mathbf A_{\mathrm{act}}^*\right),
\end{equation}
which accounts for both the amplified incident signal component and the amplified thermal noise emitted by the FARIS. In addition to $P_{\mathrm{RIS}}$, the FARIS operation also incurs the per-element hardware power consumption due to the selected fluid-active ports. Specifically, following the structure and circuit model in Fig.~\ref{fig_sche}, the circuit power consumption of the FARIS is composed of two components: i) $P_c$: the logical control and switching power consumed by each candidate element, and ii) $P_{\mathrm{DC}}$: the DC bias power required for active reflection~\cite{aris5}. Since the logical control circuit exists for all $M$ candidates while the active reflection branch is given only for the selected $M_o$ FARIS ports, the total FARIS power consumption is modeled as
\begin{equation}
\label{eq:Pfaris_total}
P_{\mathrm{FARIS}}^{\mathrm{tot}}\triangleq MP_c+M_oP_{\mathrm{DC}}+\xi P_{\mathrm{RIS}},\footnote{Note that for ARIS, $M\leftarrow M_o$, and it returns to the conventional ARIS power consumption model in~\cite{aris5}.}
\end{equation}
where $\xi \triangleq \frac{1}{\upsilon}$ with $\upsilon\in(0,1]$ denoting the amplifier efficiency. Therefore, under the total FARIS power budget $P_{\max,t}$, the feasible reflection design must satisfy $P_{\mathrm{FARIS}}^{\mathrm{tot}} \le P_{\max,t}$, which is equivalent to:
\begin{equation}
\label{eq:Pris_constraint}
P_{\mathrm{RIS}}\le\upsilon\big(P_{\max,t}-MP_c-M_oP_{\mathrm{DC}}\big)\triangleq P_{\max}.
\end{equation}}
The received signal $y$ at MU is
\begin{equation}
\label{eq:rx}
\begin{aligned}
y&=\sqrt{L_u}\mathbf h_u^* \mathbf s_{\mathrm{RIS}}+z\\
&= \sqrt{P L_f L_u}\mathbf h_u^*\mathbf A_{\mathrm{act}}\tilde{\mathbf h}_f x+ \sqrt{L_u}\mathbf h_u^*\mathbf A_{\mathrm{act}}\mathbf n_r + z,
\end{aligned}
\end{equation}
where $L_u$ denotes the PL of the FARIS-MU link. From~\eqref{eq:rx}, the instantaneous signal-to-interference-noise-ratio (SINR) is
\begin{equation}
\label{eq:SINR_def}
\gamma =\frac{P L_f L_u \big|\mathbf h_u^*\mathbf A_{\mathrm{act}}\tilde{\mathbf h}_f\big|^2}{\sigma_0^2 + \mathbf h_u^*\mathbf B \mathbf h_u},
\end{equation}
where $\mathbf B \triangleq L_u\sigma_r^2 \mathbf A_{\mathrm{act}}\mathbf A_{\mathrm{act}}^*$, and the ergodic rate $\bar R$ is
\begin{equation}
\label{eq:ergodic_rate}
\bar R \triangleq \mathbb E_{\mathbf h_u}\left[\log_2(1+\gamma)\right].
\end{equation}
Hence, the overall design problem of the FARIS-enabled system can be formulated as
\begin{equation}
\label{prob:maxrate}
\max_{\mathbf v,\mathbf S_{M_o}} ~  \bar R~\text{s.t.}~\eqref{eq:Pris_constraint},~\forall|v_i|\le g_{\max},~\mathbf S_{M_o}\in\{0,1\}^{M_o\times M}.
\end{equation}
Note that since $\mathrm{diag}(\mathbf v)=\boldsymbol{\Phi}\mathbf G$, we solve~\eqref{prob:maxrate} and decompose $\{v_i\}$ into magnitude and phase, which corresponds to $\{g_i\}$ and $\{e^{j\phi_i}\}(\leftrightarrow\{\phi_i\})$, respectively.
\section{Proposed AO Framework}
\label{sec:AO}
To jointly optimize $\mathbf v ( \leftrightarrow (\boldsymbol\Phi,\mathbf G))$ and $\mathbf S_{M_o}$, we adopt an AO approach.
\subsection{Optimizing $\mathbf v ( \leftrightarrow (\boldsymbol\Phi,\mathbf G))$ for Fixed $\mathbf S_{M_o}$}
\label{subsec:Vopt}
When $\mathbf S_{M_o}$ is fixed, define the lifted variable $\mathbf V$ as
\begin{equation}
\label{defvv2}
\mathbf V \triangleq \mathbf v\mathbf v^* \succeq \mathbf 0~(\mathrm{rank}\mathbf V=1).
\end{equation}
We approximate~\eqref{eq:ergodic_rate} via the SAA using $S$ independent and identically distributed (i.i.d.) samples $\{\mathbf h^{(s)}\}_{s=1}^S$ of $\mathbf h_u$:
\begin{equation}
\label{eq:rate_SAA}
\bar R_S \triangleq \frac{1}{S}\sum_{s=1}^S \log_2\left(1+\underbrace{\frac{P L_f L_u \big|\mathbf h^{(s)*}\mathbf A_{\mathrm{act}}\tilde{\mathbf h}_f\big|^2}{\sigma_0^2 + \mathbf h^{(s)*}\mathbf B \mathbf h^{(s)}}}_{\triangleq\gamma_s}\right).
\end{equation}
To transform the problem with respect to $\mathbf V$, we introduce the following theorem:
\begin{theorem}
\label{thm:reformulation}
Define
\begin{equation}
\label{eq:thm_buk}
\mathbf b \triangleq \mathbf S_{M_o}\mathbf J^{\frac{1}{2}}\tilde{\mathbf h}_f,~\mathbf u_s \triangleq \mathbf S_{M_o}\mathbf J^{\frac{1}{2}}\mathbf h^{(s)},~\mathbf K \triangleq \mathbf S_{M_o}\mathbf J\mathbf S_{M_o}^{\mathrm T},
\end{equation}
and
\begin{equation}
\label{eq:thm_asCs}
\mathbf a_s \triangleq \mathbf u_s \odot \overline{\mathbf b},~
\mathbf C_s \triangleq L_u\sigma_r^2 \big(\mathbf K \odot (\mathbf u_s\mathbf u_s^*)\big),
\end{equation}
where $\overline{\mathbf b}$ is the element-wise conjugate of $\mathbf b$. Then the following hold:
\begin{itemize}
\item $\gamma_s$ can be written as
\begin{equation}
\label{eq:thm_gams}
\gamma_s
=
\frac{P L_f L_u \mathbf v^* \mathbf a_s \mathbf a_s^* \mathbf v}
     {\sigma_0^2 + \mathbf v^* \mathbf C_s \mathbf v}
=
\frac{P L_f L_u \mathrm{tr}(\mathbf a_s\mathbf a_s^* \mathbf V)}
     {\sigma_0^2 + \mathrm{tr}(\mathbf C_s\mathbf V)}.
\end{equation}
\item $P_{\mathrm{RIS}}$ satisfies
\begin{equation}
\label{eq:thm_Pris}
P_{\mathrm{RIS}}=\mathbf v^* \big(P L_f \mathbf Q_1 + \sigma_r^2 \mathbf Q_2\big) \mathbf v=\mathrm{tr}(\mathbf F \mathbf V),
\end{equation}
where $\mathbf F \triangleq P L_f\mathbf Q_1 + \sigma_r^2\mathbf Q_2$ and
\begin{equation}
\label{eq:thm_Q1Q2}
\mathbf Q_1
\triangleq
\mathrm{diag}(\mathbf b^*)\mathbf K \mathrm{diag}(\mathbf b)=\mathbf K \odot (\overline{\mathbf b}\mathbf b^{\mathrm T}),~\mathbf Q_2 \triangleq \mathbf K \odot \mathbf K.
\end{equation}
\end{itemize}
\end{theorem}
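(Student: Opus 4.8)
The plan is to prove both parts by direct computation, expanding the definition $\mathbf{A}_{\mathrm{act}} = \mathbf{J}^{\frac{1}{2}}\mathbf{S}_{M_o}^{\mathrm{T}}\mathrm{diag}(\mathbf{v})\mathbf{S}_{M_o}\mathbf{J}^{\frac{1}{2}}$ and repeatedly invoking two structural facts. First, since $\mathbf{J}$ is real symmetric positive semidefinite, $\mathbf{J}^{\frac{1}{2}}$ may be taken real symmetric, so $(\mathbf{J}^{\frac{1}{2}})^* = \mathbf{J}^{\frac{1}{2}}$ and $\mathbf{S}_{M_o}\mathbf{J}^{\frac{1}{2}}\cdot\mathbf{J}^{\frac{1}{2}}\mathbf{S}_{M_o}^{\mathrm{T}} = \mathbf{S}_{M_o}\mathbf{J}\mathbf{S}_{M_o}^{\mathrm{T}} = \mathbf{K}$, which is itself real symmetric. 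Second, I will record once and reuse two elementary Hadamard identities: for vectors $\mathbf{p},\mathbf{q}$, $\mathbf{p}^*\mathrm{diag}(\mathbf{v})\mathbf{q} = (\mathbf{p}\odot\overline{\mathbf{q}})^*\mathbf{v}$; and for any matrix $\mathbf{M}$, $(\mathbf{v}\odot\mathbf{p})^*\mathbf{M}(\mathbf{v}\odot\mathbf{p}) = \mathbf{v}^*\big(\mathbf{M}\odot(\overline{\mathbf{p}}\,\mathbf{p}^{\mathrm{T}})\big)\mathbf{v}$. The Hermitian-form/trace equivalence $\mathbf{v}^*\mathbf{M}\mathbf{v} = \mathrm{tr}(\mathbf{M}\mathbf{v}\mathbf{v}^*) = \mathrm{tr}(\mathbf{M}\mathbf{V})$ will be applied at the end of each part.

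For the SINR I would first treat the numerator factor: collapsing $\mathbf{h}^{(s)*}\mathbf{J}^{\frac{1}{2}}\mathbf{S}_{M_o}^{\mathrm{T}}$ into $\mathbf{u}_s^*$ and $\mathbf{S}_{M_o}\mathbf{J}^{\frac{1}{2}}\tilde{\mathbf{h}}_f$ into $\mathbf{b}$ reduces $\mathbf{h}^{(s)*}\mathbf{A}_{\mathrm{act}}\tilde{\mathbf{h}}_f$ to $\mathbf{u}_s^*\mathrm{diag}(\mathbf{v})\mathbf{b}$, and the first Hadamard identity rewrites it as $\mathbf{a}_s^*\mathbf{v}$ with $\mathbf{a}_s = \mathbf{u}_s\odot\overline{\mathbf{b}}$, hence $|\mathbf{h}^{(s)*}\mathbf{A}_{\mathrm{act}}\tilde{\mathbf{h}}_f|^2 = \mathbf{v}^*\mathbf{a}_s\mathbf{a}_s^*\mathbf{v}$. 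For the denominator I would write $\mathbf{h}^{(s)*}\mathbf{B}\mathbf{h}^{(s)} = L_u\sigma_r^2\|\mathbf{A}_{\mathrm{act}}^*\mathbf{h}^{(s)}\|_2^2$, use $\mathbf{A}_{\mathrm{act}}^* = \mathbf{J}^{\frac{1}{2}}\mathbf{S}_{M_o}^{\mathrm{T}}\mathrm{diag}(\overline{\mathbf{v}})\mathbf{S}_{M_o}\mathbf{J}^{\frac{1}{2}}$ to get $\mathbf{A}_{\mathrm{act}}^*\mathbf{h}^{(s)} = \mathbf{J}^{\frac{1}{2}}\mathbf{S}_{M_o}^{\mathrm{T}}(\overline{\mathbf{v}}\odot\mathbf{u}_s)$, square the norm to obtain $(\overline{\mathbf{v}}\odot\mathbf{u}_s)^*\mathbf{K}(\overline{\mathbf{v}}\odot\mathbf{u}_s)$, and then apply the second Hadamard identity followed by an index relabeling (legitimate because $\mathbf{K}$ is symmetric) to land on $\mathbf{v}^*\mathbf{C}_s\mathbf{v}$ with $\mathbf{C}_s = L_u\sigma_r^2(\mathbf{K}\odot(\mathbf{u}_s\mathbf{u}_s^*))$. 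Dividing and passing to the trace form completes \eqref{eq:thm_gams}.

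For the radiated power I would handle the two summands of $P_{\mathrm{RIS}}$ separately. The signal term $\|\mathbf{A}_{\mathrm{act}}\tilde{\mathbf{h}}_f\|_2^2 = \|\mathbf{J}^{\frac{1}{2}}\mathbf{S}_{M_o}^{\mathrm{T}}(\mathbf{v}\odot\mathbf{b})\|_2^2 = (\mathbf{v}\odot\mathbf{b})^*\mathbf{K}(\mathbf{v}\odot\mathbf{b})$ becomes, by the second Hadamard identity, $\mathbf{v}^*\mathbf{Q}_1\mathbf{v}$ with $\mathbf{Q}_1 = \mathbf{K}\odot(\overline{\mathbf{b}}\,\mathbf{b}^{\mathrm{T}}) = \mathrm{diag}(\mathbf{b}^*)\mathbf{K}\,\mathrm{diag}(\mathbf{b})$. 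For the noise term $\mathrm{tr}(\mathbf{A}_{\mathrm{act}}\mathbf{A}_{\mathrm{act}}^*)$, I would insert the two expansions, invoke cyclicity of the trace together with $\mathbf{S}_{M_o}\mathbf{J}\mathbf{S}_{M_o}^{\mathrm{T}} = \mathbf{K}$ to reduce it to $\mathrm{tr}\big(\mathbf{K}\,\mathrm{diag}(\mathbf{v})\,\mathbf{K}\,\mathrm{diag}(\overline{\mathbf{v}})\big)$, expand entrywise into $\sum_{i,k}\bar{v}_i v_k K_{ik}K_{ki}$, and use the symmetry of $\mathbf{K}$ to recognize $K_{ik}K_{ki} = (\mathbf{K}\odot\mathbf{K})_{ik}$, i.e. $\mathbf{v}^*\mathbf{Q}_2\mathbf{v}$ with $\mathbf{Q}_2 = \mathbf{K}\odot\mathbf{K}$. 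Weighting by $PL_f$ and $\sigma_r^2$, summing, and rewriting $\mathbf{v}^*(\cdot)\mathbf{v} = \mathrm{tr}((\cdot)\mathbf{V})$ yields \eqref{eq:thm_Pris}.

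There is no deep obstacle here: the statement is pure linear algebra. The only thing that needs care is bookkeeping of complex conjugates — in particular tracking where $\mathbf{v}$ turns into $\overline{\mathbf{v}}$ upon conjugate-transposing $\mathbf{A}_{\mathrm{act}}$, and placing the conjugate correctly in the rank-one factor $\overline{\mathbf{b}}\,\mathbf{b}^{\mathrm{T}}$ (equivalently $\overline{\mathbf{b}\mathbf{b}^*}$) — together with consistently exploiting the realness and symmetry of $\mathbf{J}$ and $\mathbf{K}$ so that no stray conjugates survive. I expect the noise-power trace $\mathrm{tr}(\mathbf{A}_{\mathrm{act}}\mathbf{A}_{\mathrm{act}}^*)$ to be the most error-prone step, since the Hadamard structure only emerges after one cyclic permutation and an index swap.
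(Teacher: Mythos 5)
Your proof is correct and follows essentially the same route as the paper's Appendix A: expand $\mathbf A_{\mathrm{act}}$, collapse the $\mathbf J^{1/2}\mathbf S_{M_o}^{\mathrm T}$ factors into $\mathbf b$, $\mathbf u_s$, $\mathbf K$, and convert the resulting diagonal/quadratic forms into Hadamard products before passing to the trace form via $\mathbf V=\mathbf v\mathbf v^*$. The only difference is presentational — you isolate the two Hadamard identities and the symmetric-$\mathbf K$ index swap explicitly, which the paper leaves implicit.
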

\begin{proof}
See Appendix A.
\end{proof}
Hence by Theorem~\ref{thm:reformulation}, the active reflection design subproblem becomes
\begin{equation}
\label{prob:Vopt_QCQP}
\begin{aligned}
\max_{\mathbf V}~
& \frac{1}{S}\sum_{s=1}^S
\log_2\Big(1+\frac{P L_f L_u\mathrm{tr}(\mathbf a_s\mathbf a_s^*\mathbf V)}
{\sigma_0^2+\mathrm{tr}(\mathbf C_s\mathbf V)}\Big)\\
\text{s.t.}~
& \mathrm{tr}(\mathbf F\mathbf V)\le P_{\max},~V_{ii}\le g_{\max}^2~(i=1,\cdots,M_o),\\
& \mathrm{rank}\mathbf V=1,~\mathbf V\succeq 0,
\end{aligned}
\end{equation}
where $V_{ii}$ is the $ii$th component of $\mathbf V$. Problem~\eqref{prob:Vopt_QCQP} is still not convex due to the fractional objective and the rank-1 constraint. To handle the nonconvex fractional structure, we will drop the rank constraint and apply a fractional/quadratic transform to obtain a sequence of convex conic subproblems in $\mathbf V$, followed by Gaussian randomization~\cite{sdr} and magnitude projection if necessary.

We first apply the quadratic transform, also known as the fractional programming technique, to~\eqref{prob:Vopt_QCQP}. For each $s$, introduce an auxiliary variable $y_s \ge0$ and use the following transformation for $A\ge0$ and $B>0$~\cite{fracp}:
\begin{equation}
\label{eq:quad_transform_identity}
\frac{A}{B}= \max_{y\ge 0} \bigl(2y\sqrt{A}-y^2 B\bigr),
\end{equation}
where the unique maximizer is given by $y^\star=\frac{\sqrt{A}}{B}$. Equation~\eqref{eq:quad_transform_identity} converts a ratio into a concave quadratic form with respect to $y\ge0$ and yields a closed-form update for $y$ given the other variables, enabling alternating optimization for sum-of-log-fraction objectives. Applying the result to~\eqref{eq:thm_gams},~\eqref{eq:rate_SAA} is equivalently lower-bounded as
\begin{equation}
\label{eq:transformed_obj}
\begin{aligned}
&\bar R_S \ge \frac{1}{S}\sum_{s=1}^S \log_2(1+\xi_s), ~\\
&\xi_s \triangleq P L_f L_u\Big(2y_s\sqrt{\mathrm{tr}(\mathbf a_s\mathbf a_s^*\mathbf V)}-y_s^2(\sigma_0^2+\mathrm{tr}(\mathbf C_s\mathbf V))\Big)(\ge0),
\end{aligned}
\end{equation}
where the maximum $\xi_s=\gamma_s$ holds by substituting $y_s =\frac{\sqrt{\mathrm{tr}(\mathbf a_s\mathbf a_s^*\mathbf V)}}{\sigma_0^2 + \mathrm{tr}(\mathbf C_s\mathbf V)}$. We will therefore maximize $\frac{1}{S}\sum_{s=1}^S \log_2(1+\xi_s)$ as the lower-bound of objective in~\eqref{prob:Vopt_QCQP}, which is constructed by the following procedure:
\subsubsection{Update of $\mathbf V$ ($\leftrightarrow\mathbf v$) for fixed $\{y_s\}$}
Since $\sqrt{\mathrm{tr}(\mathbf a_s\mathbf a_s^*\mathbf V)}$ in~\eqref{eq:transformed_obj} is nonconvex with respect to $\mathbf V$, by introducing the slack variables $\{w_s\ge0\}_{s=1}^S$, $\sqrt{\mathrm{tr}(\mathbf a_s\mathbf a_s^*\mathbf V)}$ is handled by $w_s$ subject to the second-order cone (SOC) $\mathcal Q_r$~\cite{boyd}:
\begin{equation}
\label{socde}
\Big(w_s, \sqrt{\mathrm{tr}(\mathbf a_s\mathbf a_s^*\mathbf V)}\Big)\in\mathcal Q_r~\leftrightarrow~ w_s\le \sqrt{\mathrm{tr}(\mathbf a_s\mathbf a_s^*\mathbf V)},
\end{equation}
which is equivalent to stating that $w_s$ forms the hypograph of $\sqrt{\mathrm{tr}(\mathbf a_s\mathbf a_s^*\mathbf V)}$. Hence, for fixed $\{y_s\}$, the transformed subproblem becomes
\begin{equation}
\label{prob:V_SOC}
\begin{aligned}
\max_{\mathbf V,\{w_s, \xi_s\}} & 
~\frac{1}{S}\sum_{s=1}^S \log_2(1+\xi_s)\\
\text{s.t.}~
& \xi_s \le P L_f L_u\left(2y_s w_s - y_s^2\big(\sigma_0^2+\mathrm{tr}(\mathbf C_s\mathbf V)\big)\right),\\
& \Big(w_s, \sqrt{\mathrm{tr}(\mathbf a_s\mathbf a_s^*\mathbf V)}\Big)\in\mathcal Q_r,\\
& w_s\ge0,~\xi_s\ge0,~ \mathbf V\succeq0,\\
& \mathrm{tr}(\mathbf F\mathbf V)\le P_{\max},~V_{ii}\le g_{\max}^2.\\
& (i=1,\cdots,M_o,~s=1,\cdots,S).
\end{aligned}
\end{equation}
Problem~\eqref{prob:V_SOC} is a convex conic optimization problem with linear matrix inequality (LMI) and SOC constraints. Upon solving it, if the optimal $\mathbf V_0$ is rank-one, the corresponding $\mathbf v_0$ is directly obtained as its principal eigenvector satisfying $\mathbf V_0 = \mathbf v_0 \mathbf v_0^{*}$. Otherwise, $\mathbf v_0$ is extracted via Gaussian randomization, starting by computing the Cholesky factorization of $\mathbf V_0:~\mathbf V_0 = \mathbf L \mathbf L^{*}$~\cite{chol}, where $\mathbf L$ is a lower-triangular matrix. We then generate $N_{\mathrm{rand}}$ independent Gaussian realizations:
\begin{equation}
\label{eq:rand_vectors}
{\mathbf v}_n = \mathbf L \mathbf z_n~(n = 1,\cdots, N_{\mathrm{rand}}),
\end{equation}
where $\{\mathbf z_n \sim \mathcal{CN}(\mathbf 0, \mathbf I_{M_o})\}$ are i.i.d. standard circularly symmetric complex Gaussian random vectors. For each ${\mathbf v}_n$, we first enforce feasibility with respect to the hardware constraints. Specifically, we apply an element-wise magnitude projection to satisfy $\{V_{ii}\le g_{\max}^2\}$ in~\eqref{prob:V_SOC},
\begin{equation}
\label{eq:mag_projection_re}
\big[\mathbf v_{n}\big]_i \leftarrow \min\{g_{\max}, |[\mathbf v_{n}]_i|\}e^{j\angle [\mathbf v_{n}]_i}~(i=1,\cdots,M_o),
\end{equation}
followed by a global power scaling
\begin{equation}
\label{eq:power_scaling}
\mathbf v_n \leftarrow
\alpha_n \mathbf v_n,~\alpha_n \triangleq \min\left\{1,\sqrt{\frac{P_{\max}}{\mathbf v_n^{*}\mathbf F\mathbf v_n}}\right\},
\end{equation}
which guarantees $\mathbf v_n^{*}\mathbf F\mathbf v_n\le P_{\max}$ in~\eqref{prob:V_SOC}. Let $\mathbf v_n^{\mathrm{feas}}$ denote the resulting feasible vector and define the corresponding rank-one matrix $\mathbf V_n\triangleq \mathbf v_n^{\mathrm{feas}}\mathbf v_n^{\mathrm{feas}*}$. For each $\mathbf V_n$, we evaluate the objective in~\eqref{prob:Vopt_QCQP}, denoted by $\tilde R_n$, then select the best randomized candidate as
\begin{equation}
\label{eq:n_star_new}
n^\star = \argmax_{n\in\{1,\cdots,N_{\mathrm{rand}}\}} \tilde{R}_n,
\end{equation}
which corresponds to $\mathbf v_{n^\star}^{\mathrm{feas}}$.
\subsubsection{Update of $y_s$ for fixed $\mathbf V$}
For fixed $\mathbf V$, each $y_s$ admits the closed-form update
\begin{equation}
\label{eq:y_update_SOC}
y_s^{\star} = \frac{w_s}{\sigma_0^2+\mathrm{tr}(\mathbf C_s\mathbf V)}~(s=1,\cdots,S).
\end{equation}
By alternately updating $\mathbf V$ from~\eqref{prob:V_SOC} and $\{y_s\}$ from~\eqref{eq:y_update_SOC},
the objective is guaranteed to be non-decreasing until convergence, which will be shown later.
\begin{algorithm}[t]
\caption{AO Framework of $\mathbf v$ and $\{y_s\}$ for Fixed $\mathbf{S}_{M_o}$}
\label{alg:AO}
\begin{algorithmic}[1]
\State \textbf{Input:} $\{\mathbf h^{(s)}\}_{s=1}^S,~\epsilon_v,~\mathbf S_{M_o}$
\State \textbf{Initialize:} $\mathbf v^{[0]}(\leftrightarrow\mathbf V^{[0]})$ satisfying~\eqref{eq:Pris_constraint} and $\forall|v_i^{[0]}|<g_{\max}$, $\{y_s^{[0]}\},~t \leftarrow 0$
\State Evaluate $\bar R_S^{[0]}$ by~\eqref{eq:rate_SAA}
\Repeat
    \State Solve~\eqref{prob:V_SOC} to obtain $\mathbf V^{[t+1]}$
    \If{$\mathrm{rank}\mathbf V^{[t+1]} > 1$}
        \State Extract $\mathbf v^{[t+1]}$ via Gaussian randomization
    \Else
        \State $\mathbf v^{[t+1]} \leftarrow$ principal eigenvector of $\mathbf V^{[t+1]}$
    \EndIf
        \For{$s = 1, \cdots, S$}
        \State Update $y_s^{[t+1]}$ via~\eqref{eq:y_update_SOC} using $\mathbf V^{[t+1]}$
    \EndFor
    \State Evaluate $\bar R_S^{[t+1]}$ by~\eqref{eq:rate_SAA}
    \State $t \leftarrow t + 1$
\Until{$|\bar R_S^{[t]} - \bar R_S^{[t-1]}| < \epsilon_v$}
\State $\mathbf v^\dagger \leftarrow \mathbf v^{[t]}(\leftrightarrow\mathbf V^\dagger \leftarrow \mathbf V^{[t]})$
\State \textbf{Output:} $\mathbf v^\dagger(\leftrightarrow\mathbf V^\dagger)$
\end{algorithmic}
\end{algorithm}
The overall procedure is given by Algorithm~\ref{alg:AO}.
\begin{remark}
\label{r11}
In Algorithm~\ref{alg:AO}, we initialize $\mathbf v^{[0]}(\leftrightarrow\mathbf V^{[0]})$ by drawing i.i.d. $\theta_i\sim\mathcal U[0,2\pi)$, set $\bar{\mathbf v}=g_{\max}[e^{j\theta_1} \cdots e^{j\theta_{M_o}}]^{\mathrm T}$, and scale $\bar{\mathbf v}\rightarrow \mathbf v^{[0]}$ so that $\mathbf v^{[0]*} \mathbf F\mathbf v^{[0]}\le P_{\max}$. This guarantees feasibility of~\eqref{eq:Pris_constraint} and $\forall|v_i^{[0]}|\le g_{\max}$. Thereafter, we set $y_s^{[0]} = \frac{\sqrt{\mathrm{tr}(\mathbf a_s\mathbf a_s^* \mathbf V^{[0]})}}{\sigma_0^2 + \mathrm{tr}(\mathbf C_s \mathbf {V}^{[0]})}~(\forall s)$. 
\end{remark}
\subsubsection{Monotonicity and Convergence Analysis of Algorithm~\ref{alg:AO}}
To ensure that Algorithm~\ref{alg:AO} produces stable performance improvements, it is essential to verify that each iteration does not degrade $\bar{R}_S$. The following theorem establishes this monotonicity property.
\begin{theorem}
\label{th1}
In Algorithm~\ref{alg:AO}, $\{{\bar R_S^{[t]}}\}$ monotonically non-decreases.
\end{theorem}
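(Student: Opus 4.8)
The plan is to show that the iterate-to-iterate change $\bar R_S^{[t+1]}-\bar R_S^{[t]}$ decomposes into a chain of inequalities, each corresponding to one stage of the inner loop, and that every link in the chain is non-negative. First I would introduce the surrogate $F_S(\mathbf V,\{y_s\})\triangleq\frac1S\sum_s\log_2(1+\xi_s(\mathbf V,y_s))$ with $\xi_s$ as in~\eqref{eq:transformed_obj}, and record two elementary facts from the earlier development: (i) by the quadratic-transform identity~\eqref{eq:quad_transform_identity}, for any feasible $\mathbf V$ one has $F_S(\mathbf V,\{y_s\})\le \bar R_S(\mathbf V)$ with equality at $y_s=y_s^\star(\mathbf V)$ given in~\eqref{eq:y_update_SOC}; and (ii) at the end of iteration $t$ the algorithm sets $y_s^{[t]}=y_s^\star(\mathbf V^{[t]})$ (this is exactly the update in line 13 and the initialization in Remark~\ref{r11}), so that $F_S(\mathbf V^{[t]},\{y_s^{[t]}\})=\bar R_S^{[t]}$.

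Next I would walk through the two sub-updates. For the $\mathbf V$-update: $\mathbf V^{[t]}$ is feasible for~\eqref{prob:V_SOC} with the fixed weights $\{y_s^{[t]}\}$ (with the associated slacks $w_s=\sqrt{\mathrm{tr}(\mathbf a_s\mathbf a_s^*\mathbf V^{[t]})}$ and $\xi_s=\xi_s(\mathbf V^{[t]},y_s^{[t]})$), so the optimizer $\mathbf V^{[t+1]}$ of the convex problem~\eqref{prob:V_SOC} satisfies $F_S(\mathbf V^{[t+1]},\{y_s^{[t]}\})\ge F_S(\mathbf V^{[t]},\{y_s^{[t]}\})=\bar R_S^{[t]}$. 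Here I must address the non-rank-one case: when $\mathbf V^{[t+1]}$ returned by the solver has rank $>1$, the algorithm replaces it by a randomized rank-one candidate and then magnitude-projects; I would argue that the randomization step~\eqref{eq:n_star} selects $n^\star$ maximizing $\tilde R_n$ over all drawn candidates, and — using the standard SDR convention — that $\mathbf V^{[t]}$ itself (already rank-one and feasible) is retained among the candidates, so $\tilde R_{n^\star}\ge\bar R_S^{[t]}$; this guarantees the $\mathbf V$-step is never worse than the incumbent regardless of rank. (If the paper's implementation does not literally keep the incumbent as a candidate, the cleanest fix is to state that $\mathbf v^{[t+1]}$ is accepted only if it improves $\bar R_S$, else $\mathbf v^{[t]}$ is kept — a monotone "keep-the-best" safeguard — which is the usual reading of such algorithms.) For the $y_s$-update: with $\mathbf V^{[t+1]}$ fixed, setting $y_s^{[t+1]}=y_s^\star(\mathbf V^{[t+1]})$ gives, by fact (i) applied at $\mathbf V^{[t+1]}$, $F_S(\mathbf V^{[t+1]},\{y_s^{[t+1]}\})=\bar R_S(\mathbf V^{[t+1]})=\bar R_S^{[t+1]}\ge F_S(\mathbf V^{[t+1]},\{y_s^{[t]}\})$, the last inequality because $y_s^\star$ is the unique maximizer of each term over $y_s\ge0$.

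Chaining the three relations yields $\bar R_S^{[t+1]}=F_S(\mathbf V^{[t+1]},\{y_s^{[t+1]}\})\ge F_S(\mathbf V^{[t+1]},\{y_s^{[t]}\})\ge F_S(\mathbf V^{[t]},\{y_s^{[t]}\})=\bar R_S^{[t]}$, which is the claimed monotonicity. I expect the main obstacle to be the rank-one extraction step: unlike the clean variational argument for the convex relaxation and the $y_s$-update, the Gaussian-randomization-plus-projection map is not obviously non-decreasing, so the proof hinges on either (a) the incumbent-retention argument above, or (b) an explicit "accept only if improved" safeguard. I would flag this explicitly and adopt one of these two standard conventions. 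The remaining steps — feasibility of $\mathbf V^{[t]}$ for~\eqref{prob:V_SOC}, the tightness of the quadratic transform, and uniqueness of the $y_s^\star$ maximizer — are routine given~\eqref{eq:quad_transform_identity} and the construction of~\eqref{prob:V_SOC}.
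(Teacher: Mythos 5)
Your proposal follows essentially the same route as the paper's Appendix~B: tightness of the quadratic-transform bound at $\{y_s^{[t]}\}=\{y_s^\star(\mathbf V^{[t]})\}$, improvement of the surrogate by the convex $\mathbf V$-update, and re-tightening via the closed-form $y_s$-update, chained into $\bar R_S^{[t+1]}\ge\bar R_S^{[t]}$. The one place you go beyond the paper is the rank-one extraction: the paper's proof silently treats $\mathbf V^{[t+1]}$ as the exact optimizer of~\eqref{prob:V_SOC} and never accounts for the Gaussian-randomization and magnitude-projection steps in Algorithm~\ref{alg:AO}, which can strictly decrease the objective relative to the relaxed solution. Your flag is therefore a legitimate criticism of the paper's argument rather than a defect of your own; either of your proposed repairs (retaining the incumbent $\mathbf v^{[t]}$ among the randomized candidates, or an explicit accept-only-if-improved safeguard) would be needed to make the monotonicity claim rigorous whenever the SDR solution is not rank-one.
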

\begin{proof}
See Appendix B.
\end{proof}
Furthermore, since the feasible set of $\mathbf V$ is compact due to the constraints $\mathrm{tr}(\mathbf F\mathbf V)\le P_{\max}$ and $V_{ii}\le g_{\max}^2~(i=1, \cdots, M_o)$ in~\eqref{prob:Vopt_QCQP}, $\{\bar R_S^{[t]}\}$ is bounded above. Therefore, $\{\bar R_S^{[t]}\}_{t\ge 0}$ converges to a finite limit.
\subsection{Optimizing $\mathbf S_{M_o}$ for fixed $\mathbf v$}
\label{subsec:stoch_sel}
To stochastically determine the active ports for given $\mathbf v$, each port $i$ is assigned an selection probability $p_i\in[0,1]$, forming the probability vector $\mathbf p=[p_1\cdots p_M]^{\mathrm T}$. Let $\boldsymbol\zeta=[\zeta_1\cdots\zeta_M]^{\mathrm T}\in\{0,1\}^M$ denote a random binary port-selection vector sampled as
\begin{equation}
\label{eq:bern}
\boldsymbol\zeta \sim \mathrm{Bernoulli}(\mathbf p),~\mathbb P(\boldsymbol\zeta;\mathbf p)=\prod_{i=1}^{M}p_i^{\zeta_i}(1-p_i)^{1-\zeta_i}.
\end{equation}
We propose to utilize an $M_o$-constrained CEM that fits an independent Bernoulli law $\boldsymbol\zeta\sim\mathrm{Bernoulli}(\mathbf p)$ to the elite set of high-performing port subsets while exactly enforcing the expected-cardinality budget $\sum_{i=1}^M p_i=M_o$~\cite{CEM, FRISonoff, FRISsec}.
\subsubsection{Ideal Target Distribution}
Let $\mathcal Z\subseteq\{0,1\}^M$ denote the set of all feasible binary port-selection vectors, and let $\bar R_S(\boldsymbol\zeta)$ be the objective in~\eqref{eq:rate_SAA} corresponding to $\boldsymbol\zeta\in\mathcal Z$. Define the elite threshold $\gamma_\rho$ as the $(1-\rho)$-quantile of $\bar R_S(\boldsymbol\zeta)$ under the current sampling distribution, so that a fraction $\rho$ of samples satisfy $\bar R_S(\boldsymbol\zeta)\ge\gamma_\rho$. The ideal target distribution $\mathbb P^\dagger(\boldsymbol\zeta)$ that perfectly concentrates on the elite set is then
\begin{equation}
\label{eq:idealP}
\mathbb P^\dagger(\boldsymbol\zeta)\propto
\begin{cases}
1~(\bar R_S(\boldsymbol\zeta)\ge\gamma_\rho)\\
0~(\text{otherwise}).
\end{cases}
\end{equation}
Intuitively, $\mathbb P^\dagger$ assigns nonzero probability only to high-performing (elite) selection patterns.

The CEM seeks $\mathbb P(\boldsymbol\zeta;\mathbf p)$ that approximates $\mathbb P^\dagger$ as closely as possible. Formally, this is achieved by minimizing the Kullback-Leibler (KL) divergence:
\begin{equation}
\label{eq:KL}
\begin{aligned}
\mathbf p^{\mathrm{new}}
&\triangleq\arg\min_{\mathbf p}
D_{\mathrm{KL}}\big(\mathbb P^\dagger \| \mathbb P(\cdot;\mathbf p)\big)\\
&=\arg\max_{\mathbf p}
\mathbb E_{\mathbb P^\dagger}
\big[\log \mathbb P(\boldsymbol\zeta;\mathbf p)\big].
\end{aligned}
\end{equation}
Since $\mathbb P^\dagger$ is unknown, we approximate the expectation in~\eqref{eq:KL} using the empirical distribution of the elite samples:
\begin{equation}
\label{eq:CE_emp}
\mathbb E_{\mathbb P^\dagger}\big[\log \mathbb P(\boldsymbol\zeta;\mathbf p)\big]
\approx
\frac{1}{N_e}\sum_{n\in\mathcal E}
\log \mathbb P(\boldsymbol\zeta_n;\mathbf p),
\end{equation}
where we draw $N_{\mathrm{mc}}$ i.i.d. samples $\{\boldsymbol\zeta_n\}_{n=1}^{N_{\mathrm{mc}}}$ from $\mathrm{Bernoulli}(\mathbf p)$ and evaluate $\{\bar R_S(\boldsymbol\zeta)\}_{n=1}^{N_{\mathrm{mc}}}$. Let $\mathcal E$ be the elite index set of the top-$\rho$ fraction, with size of $N_e=\lceil \rho N_{\mathrm{mc}}\rceil$ by definition.
%where $\mathcal E$ is the index set of elite samples and $N_e=|\mathcal E|$.
Maximizing~\eqref{eq:CE_emp} yields
\begin{equation}
\label{eq:CE_emp2}
\mathbf p^{\mathrm{new}}
=\arg\max_{\mathbf p\in[0,1]^M}
\frac{1}{N_e}\sum_{n\in\mathcal E}
\log \mathbb P(\boldsymbol\zeta_n;\mathbf p).
\end{equation}
which is exactly the empirical log-likelihood maximization used in the CEM update rule.
\subsubsection{Applying to $M_o$-constrained CEM with Iterations}
We now modify the CEM with the aforementioned cardinality budget. Herein at iteration $t$, draw $N_{\mathrm{mc}}$ i.i.d. samples $\{\boldsymbol\zeta_n^{[t]}\}_{n=1}^{N_{\mathrm{mc}}}$ from $\mathrm{Bernoulli}(\mathbf p^{[t]})$ and evaluate $\{\bar R_S(\boldsymbol\zeta_n^{[t]})\}_{n=1}^{N_{\mathrm{mc}}}$. Let $\mathcal E^{[t]}$ be the elite index set of the top-$\rho$ fraction, with size of $N_e=\lceil \rho N_{\mathrm{mc}}\rceil$ by definition. Then we can define the elite empirical average per coordinate $i$:
\begin{equation}
\label{eq:mu_def}
\mu_{i}^{[t]} \triangleq \frac{1}{N_e}\sum_{n\in \mathcal E^{[t]}} \zeta_{n,i}^{[t]},
\end{equation}
where $\zeta_{n,i}^{[t]}$ is the $i$th component of $\boldsymbol\zeta_n^{[t]}~(i=1, \cdots, M)$. %and $\mu_i^{[t]}\in (0, 1)$ holds by definition, and 
By substituting into~\eqref{eq:CE_emp2}, it becomes
\begin{equation}
\label{prob:CE_main}
\begin{aligned}
\max_{\mathbf p}~ &
\Phi(\mathbf p)\triangleq\sum_{i=1}^M\Big(\mu_i^{[t]}\log p_i + (1-\mu_i^{[t]})\log(1-p_i)\Big)\\
\text{s.t.}~ & \sum_{i=1}^M p_i = M_o,~\mathbf p\in(0,1)^M.
\end{aligned}
\end{equation}
Problem~\eqref{prob:CE_main} is strictly concave with an affine constraint, hence admits a unique global maximizer, as characterized in the following theorem.
\begin{theorem}
\label{lem:CEM_KKT}
Let $\nu\in\mathbb R$ be the Lagrange multiplier. The unique maximizer of~\eqref{prob:CE_main} satisfies, for each $i$,
\begin{equation}
\label{eq:quad_eq}
\underbrace{\nu p_i^2 - (\nu+1)p_i + \mu_i^{[t]}}_{\triangleq f(p_i,\nu)}=0,
\end{equation}
whose solution in $(0,1)$ is
\begin{equation}
\label{eq:p_of_nu}
p_i(\nu) =
\begin{cases}
\mu_i^{[t]} & (\nu=0)\\
\displaystyle\frac{(\nu+1)-\sqrt{(\nu+1)^2-4\nu\mu_i^{[t]}}}{2\nu} & (\nu\neq 0).
\end{cases}
\end{equation}
Moreover, the mapping $\nu\mapsto \sum_{i=1}^M p_i(\nu)$ is strictly decreasing and continuous on its domain. Hence there exists a unique $\nu^\dagger$ such that $\sum_i p_i(\nu^\dagger)=M_o$.
\end{theorem}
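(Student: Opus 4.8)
The plan is to proceed in four stages: (i) confirm that \eqref{prob:CE_main} has a unique maximizer lying in the open box; (ii) write the first-order optimality condition and reduce it to the quadratic $f(p_i,\nu)=0$; (iii) solve that quadratic and identify the unique root in $(0,1)$ as the stated $p_i(\nu)$; and (iv) analyze the scalar map $\nu\mapsto\sum_i p_i(\nu)$ to obtain existence and uniqueness of $\nu^\dagger$.

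For (i), each summand $\mu_i^{[t]}\log p_i+(1-\mu_i^{[t]})\log(1-p_i)$ is concave on $(0,1)$ (strictly so when $\mu_i^{[t]}\in(0,1)$), so $\Phi$ is concave, the feasible set is convex and nonempty (it contains $p_i=M_o/M$), and since $\Phi(\mathbf p)\to-\infty$ whenever any coordinate tends to $0$ or $1$, the supremum is attained at an interior point; strict concavity of $\Phi$ on the feasible hyperplane gives uniqueness. Because the optimizer is interior, the box constraints are inactive and Lagrange stationarity reduces to $\nabla\Phi(\mathbf p^\star)=\nu\mathbf 1$ for a single multiplier $\nu$. Computing $\partial\Phi/\partial p_i=\mu_i^{[t]}/p_i-(1-\mu_i^{[t]})/(1-p_i)$, setting it equal to $\nu$, and clearing the denominator $p_i(1-p_i)$ collapses (after the identity $\mu_i^{[t]}(1-p_i)-(1-\mu_i^{[t]})p_i=\mu_i^{[t]}-p_i$) precisely to $\nu p_i^2-(\nu+1)p_i+\mu_i^{[t]}=0$, i.e.\ $f(p_i,\nu)=0$.

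For (iii), I treat $\nu=0$ separately (it gives $p_i=\mu_i^{[t]}$ directly) and for $\nu\neq0$ observe that the discriminant $(\nu+1)^2-4\nu\mu_i^{[t]}=\nu^2+2(1-2\mu_i^{[t]})\nu+1$, viewed as a quadratic in $\nu$, has discriminant $-16\mu_i^{[t]}(1-\mu_i^{[t]})\le0$, hence is strictly positive for all $\nu$ when $\mu_i^{[t]}\in(0,1)$; so $f(\cdot,\nu)$ has two distinct real roots. Since $f(0,\nu)=\mu_i^{[t]}>0$ and $f(1,\nu)=\mu_i^{[t]}-1<0$, exactly one root lies in $(0,1)$, and a short case check on the sign of $\nu$ (upward vs.\ downward parabola, noting $0$ lies between the roots when $\nu<0$) shows that in both cases this root equals $\frac{(\nu+1)-\sqrt{(\nu+1)^2-4\nu\mu_i^{[t]}}}{2\nu}$, matching \eqref{eq:p_of_nu}; a Taylor expansion of the square root as $\nu\to0$ confirms this branch is continuous at $\nu=0$ with limit $\mu_i^{[t]}$.

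For (iv), I differentiate $f(p_i(\nu),\nu)=0$ implicitly to get $p_i'(\nu)=\frac{p_i(1-p_i)}{2\nu p_i-(\nu+1)}$; the denominator is $\partial f/\partial p$ at the selected root, which the sign analysis of step (iii) shows is strictly negative along this branch, so $p_i'(\nu)<0$ and $\nu\mapsto\sum_i p_i(\nu)$ is continuous and strictly decreasing. Computing the limits $p_i(\nu)\to0^+$ as $\nu\to+\infty$ and $p_i(\nu)\to1^-$ as $\nu\to-\infty$ (each a one-line asymptotic of the closed form), $\sum_i p_i(\nu)$ is a decreasing bijection from $\mathbb R$ onto $(0,M)$; since $M_o\in(0,M)$, the intermediate value theorem yields the unique $\nu^\dagger$ with $\sum_i p_i(\nu^\dagger)=M_o$. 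I expect step (iii) to be the main obstacle: making the root selection rigorous across the sign regimes of $\nu$ (and gracefully absorbing any degenerate coordinates with $\mu_i^{[t]}\in\{0,1\}$), since this is simultaneously what pins down the closed form and, through the sign of $\partial f/\partial p$ along the branch, powers the monotonicity argument in step (iv).
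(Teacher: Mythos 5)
Your proposal is correct and follows essentially the same route as the paper's Appendix C: Lagrangian stationarity reducing to the quadratic $f(p_i,\nu)=0$, sign analysis of $f$ at $0$ and $1$ to select the minus branch in both regimes of $\nu$, implicit differentiation with denominator $-\sqrt{\Delta}<0$ to get strict monotonicity, and the limits $g(\nu)\to M$ and $g(\nu)\to 0$ with the intermediate value theorem. The only additions beyond the paper's argument are welcome bits of extra rigor — the coercivity argument placing the maximizer in the interior and the explicit discriminant computation showing two distinct real roots — neither of which changes the approach.
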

\begin{proof}
See Appendix C.
\end{proof}
Given $\nu^\dagger$, let $\mathbf p^{\mathrm{CE}}\triangleq\mathbf p(\nu^\dagger)$ the exact maximizer of CEM in~\eqref{prob:CE_main}. A standard smoothed update is then
\begin{equation}
\label{eq:smoothed_update}
\mathbf p^{[t+1]}=(1-\omega)\mathbf p^{[t]}+\omega\mathbf p^{\mathrm{CE}}~(\omega\in(0,1)),
\end{equation}
which preserves the expected ``$M_o$-sum'' budget exactly since $\sum_i p_i^{[t]}=\sum_i p_i(\nu^\dagger)=M_o$ and by linearity in~\eqref{eq:smoothed_update}, and improves $\Phi$ by following theorem.
\begin{theorem}
\label{prop:monotone}
$\mathbf p^{\mathrm{CE}}=\mathbf p(\nu^\dagger)$ strictly increases $\Phi$ unless $\mathbf p^{[t]}$ already solves~\eqref{prob:CE_main}. Moreover, for any $\omega\in(0,1]$, the smoothed update~\eqref{eq:smoothed_update} yields $\Phi(\mathbf p^{[t+1]})\ge \Phi(\mathbf p^{[t]})$.
\end{theorem}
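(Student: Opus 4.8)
The plan is to lean entirely on two facts already in hand: Theorem~\ref{lem:CEM_KKT} identifies $\mathbf p^{\mathrm{CE}}=\mathbf p(\nu^\dagger)$ as \emph{the unique} global maximizer of~\eqref{prob:CE_main}, and the objective $\Phi$ there is concave on the feasible region. Given these, both assertions collapse to elementary convexity arguments, so I would not expect to re-derive any optimality conditions.

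First I would pin down that the incumbent $\mathbf p^{[t]}$ is itself feasible for~\eqref{prob:CE_main}, i.e.\ $\mathbf p^{[t]}\in(0,1)^M$ and $\sum_i p_i^{[t]}=M_o$. At $t=0$ this is the stated CEM initialization; for $t\ge 1$ it propagates through~\eqref{eq:smoothed_update}, since $\mathbf p^{[t]}$ is a convex combination of $\mathbf p^{[t-1]}\in(0,1)^M$ and $\mathbf p^{\mathrm{CE}}\in(0,1)^M$ (the latter by Theorem~\ref{lem:CEM_KKT}), hence remains in the convex set $(0,1)^M$, and the sum stays equal to $M_o$ by linearity because both endpoints sum to $M_o$ (as already noted below~\eqref{eq:smoothed_update}). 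This makes ``the unique maximizer beats $\mathbf p^{[t]}$'' a legitimate comparison.

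For the first claim I would record that $\Phi$ is in fact strictly concave: each summand $p\mapsto\mu_i^{[t]}\log p+(1-\mu_i^{[t]})\log(1-p)$ has second derivative $-\mu_i^{[t]}/p^2-(1-\mu_i^{[t]})/(1-p)^2<0$ on $(0,1)$ for every $\mu_i^{[t]}\in[0,1]$. Since $\mathbf p^{\mathrm{CE}}$ is the unique maximizer over the feasible set and $\mathbf p^{[t]}$ is a feasible point, $\Phi(\mathbf p^{\mathrm{CE}})\ge\Phi(\mathbf p^{[t]})$, with strict inequality unless $\mathbf p^{[t]}=\mathbf p^{\mathrm{CE}}$, i.e.\ unless $\mathbf p^{[t]}$ already solves~\eqref{prob:CE_main}. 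For the second claim, fix $\omega\in(0,1]$; concavity of $\Phi$ applied to the two-point average in~\eqref{eq:smoothed_update} gives $\Phi(\mathbf p^{[t+1]})\ge(1-\omega)\Phi(\mathbf p^{[t]})+\omega\Phi(\mathbf p^{\mathrm{CE}})$, and substituting $\Phi(\mathbf p^{\mathrm{CE}})\ge\Phi(\mathbf p^{[t]})$ from the first claim collapses the right-hand side to $\Phi(\mathbf p^{[t]})$, yielding $\Phi(\mathbf p^{[t+1]})\ge\Phi(\mathbf p^{[t]})$.

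The only delicate bookkeeping is the feasibility propagation in the second paragraph — in particular that the CEM maximizer stays strictly inside the open box so the convex-combination argument never leaves the domain of $\Phi$ — but this is precisely what Theorem~\ref{lem:CEM_KKT} guarantees, so no genuine obstacle remains; the heart of the proof is just Jensen's inequality on a two-point distribution combined with uniqueness of the maximizer of a strictly concave program.
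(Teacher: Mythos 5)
Your proof is correct and follows essentially the same route as the paper's: strict concavity plus uniqueness of the maximizer yields the strict increase unless $\mathbf p^{[t]}$ is already optimal, and Jensen's inequality on the two-point convex combination in~\eqref{eq:smoothed_update} gives the smoothed-update monotonicity. Your feasibility bookkeeping is in fact slightly cleaner than the paper's, which attributes feasibility of the convex combination to ``strict concavity'' rather than, as you correctly do, to convexity of the feasible set (affine constraint plus the open box $(0,1)^M$).
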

\begin{proof}
Strict concavity and uniqueness of the maximizer imply $\Phi(\mathbf p^{\mathrm{CE}})>\Phi(\mathbf p^{[t]})$ unless optimality holds at $\mathbf p^{[t]}$. For the smoothed step, $\mathbf p^{[t+1]}=(1-\omega)\mathbf p^{[t]}+\omega\mathbf p^{\mathrm{CE}}$ remains feasible due to the strict concavity, and the concavity of $\Phi$ gives $\Phi(\mathbf p^{[t+1]})\ge (1-\omega)\Phi(\mathbf p^{[t]})+\omega\Phi(\mathbf p^{\mathrm{CE}})\ge \Phi(\mathbf p^{[t]})$, and the theorem follows.
\end{proof}
The overall procedure is summarized in Algorithm~\ref{alg:CEM_ports}, whose monotonicity and convergence is guaranteed by theories of CEM in~\cite{CEM}. After verifying that the proposed algorithm concentrates $\mathbf p^{[t]}$ to $\mathbb P^\dagger$, we finally choose $\widehat{\mathcal I}$ by selecting the top-$M_o$ indices of $\mathbf p^{[t]}$, and form the optimal $\mathbf{S}_{M_o}^\dagger$.
\begin{algorithm}[t]
\caption{CEM-Based Selection of $\mathbf S_{M_o}$ for Fixed $\mathbf{v}$}
\label{alg:CEM_ports}
\begin{algorithmic}[1]
\State \textbf{Input:} $M_o,~\rho,~N_{\mathrm{mc}},~\omega,~\epsilon_c,~\mathbf v$
\State \textbf{Initialize:} $\mathbf p^{[0]}\in(0,1)^M$ with $\sum_i p_i^{[0]}=M_o$ (e.g., uniform distribution), $t\leftarrow 0$
\Repeat
  \State Draw $\boldsymbol\zeta_n^{[t]}\sim\mathrm{Bernoulli}(\mathbf p^{[t]})~(n=1,\cdots,N_{\mathrm{mc}})$
  \State Compute $\{\bar R_S(\boldsymbol\zeta_n^{[t]})\}_{n=1}^{N_{\mathrm{mc}}}$ using $\mathbf v$
  \State $\mathcal E^{[t]}\leftarrow$ indices of top-$\rho$ samples by $\bar R_S$
  \State Compute $\{\mu_i^{[t]}\}_{i=1}^M$ by~\eqref{eq:mu_def}
  \State Find the unique $\nu^\dagger$ %s.t. $\sum_{i=1}^M p_i(\nu^\dagger)=M_o$ with $p_i(\nu)$
  from~\eqref{eq:p_of_nu}
  \State Update $\mathbf p^{[t+1]}$ by~\eqref{eq:smoothed_update}
  \State $t\leftarrow t+1$
\Until{$||\mathbf p^{[t+1]}-\mathbf p^{[t]}||<\epsilon_c$}
\State $\widehat{\mathcal I}\leftarrow \mathrm{top\text{-}}M_o(\mathbf p^{[t]})$ and form $\mathbf S_{M_o}^\dagger$ by $\widehat{\mathcal I}$
\State \textbf{Output:} $\mathbf S_{M_o}^\dagger$
\end{algorithmic}
\end{algorithm}
\begin{algorithm}[t]
\caption{Overall AO Framework}
\label{alg:overall_AO_compact}
\begin{algorithmic}[1]
\State \textbf{Initialize:} $\mathbf v^{(0)},~\mathbf S_{M_o}^{(0)},~t\leftarrow0$
\State Compute $\bar R_S^{(0)}$ by~\eqref{eq:rate_SAA}
\Repeat
\State Given $\mathbf S_{M_o}^{(t)}$, run Algorithm~\ref{alg:AO} to obtain $\mathbf v^{(t+1)}$
\State Given $\mathbf v^{(t+1)}$, run Algorithm~\ref{alg:CEM_ports} to obtain $\mathbf S_{M_o}^{(t+1)}$.
  \State $\bar R_S^{(t+1)} \leftarrow \frac{1}{S}\sum_{s=1}^S \log_2\big(1+\gamma_s(\mathbf v^{(t+1)},\mathbf S_{M_o}^{(t+1)})\big)$
  \State $t\leftarrow t+1$
\Until{$\big|\bar R_S^{(t+1)}-\bar R_S^{(t)}\big|<\epsilon_{\mathrm{out}}$}
\State \textbf{Output:} $\mathbf v^\star\leftarrow \mathbf v^{(t)},~\mathbf S_{M_o}^\star\leftarrow \mathbf S_{M_o}^{(t)},~\bar R_S^\star\leftarrow \bar R_S^{(t)}$
\end{algorithmic}
\end{algorithm}
\subsection{Overall Procedure}
The two subproblems in Section~\ref{subsec:Vopt} and~\ref{subsec:stoch_sel} are alternately solved until convergence:
\begin{equation}
\label{svssvs}
\cdots\rightarrow\mathbf S_{M_o}^{(t)} \rightarrow\mathbf V^{(t+1)} \rightarrow\mathbf S_{M_o}^{(t+1)}\rightarrow\cdots,
\end{equation}
where the overall procedure is given in Algorithm~\ref{alg:overall_AO_compact}. Note that $\mathbf v^{(0)}$ is initialized according to Remark~\ref{r11}, while $\mathbf S_{M_o}^{(0)}$ is generated by randomly selecting $M_o$ active ports from the $M$ available candidates. This proposed AO framework effectively decouples the continuous reflection design and the discrete port-selection, while ensuring monotonic improvement of the overall objective.
\subsection{Overall Convergence of Algorithm~\ref{alg:overall_AO_compact}}
We now combine the above results to discuss the convergence of the full AO procedure in Algorithm~\ref{alg:overall_AO_compact}, Let $\bar R_S^{(t)}$ denote the SAA objective value in~\eqref{eq:rate_SAA} evaluated at $(\mathbf v^{(t)},\mathbf S_{M_o}^{(t)})$ in Algorithm~\ref{alg:overall_AO_compact}. Then, for each outer iteration $t$:
\begin{itemize}
  \item Given $\mathbf S_{M_o}^{(t)}$, Algorithm~\ref{alg:AO} produces $\mathbf v^{(t+1)}$ such that
  \begin{equation}
  \label{mono1}
  \bar R_S(\mathbf v^{(t+1)},\mathbf S_{M_o}^{(t)})
  \ge
  \bar R_S(\mathbf v^{(t)},\mathbf S_{M_o}^{(t)}),
  \end{equation}
  due to the monotonicity property by Theorem~\ref{th1}
  \item Given $\mathbf v^{(t+1)}$ and sufficiently large $N_{\mathrm{mc}}$, Algorithm~\ref{alg:CEM_ports} yields $\mathbf S_{M_o}^{(t+1)}$ which guarantees~\cite{CEM}
  \begin{equation}
  \label{mono2}
  \bar R_S(\mathbf v^{(t+1)},\mathbf S_{M_o}^{(t+1)})
  \ge
  \bar R_S(\mathbf v^{(t+1)},\mathbf S_{M_o}^{(t)}).
  \end{equation}
\end{itemize}
Therefore, the outer AO iterates generate a (deterministically or in expectation) non-decreasing sequence
\begin{equation}
\label{outseq}
\bar R_S^{(0)} \le \bar R_S^{(1)} \le \cdots \le \bar R_S^{(t)} \le \cdots
\end{equation}
which is bounded above by the maximum achievable rate under the given power and hardware constraints. Consequently, $\{\bar R_S^{(t)}\}$ converges to a finite $\bar R_S ^\star$ with its limit point $(\mathbf v^\star,\mathbf S_{M_o}^\star)$.
\section{Complexity Analysis}
We characterize the computational complexity of the proposed AO framework for FARIS by each step on Algorithm~\ref{alg:overall_AO_compact}.
\subsection{Complexity of SAA-Based Rate Evaluation}
\label{saae}
For fixed $(\mathbf v, \mathbf S_{M_o})$, the SAA objective $\bar R_S$ in~\eqref{eq:rate_SAA} requires to compute $\{\gamma_s\}$ in~\eqref{eq:thm_gams}, which is dominated by each denominator term $\mathbf v^*\mathbf C_s\mathbf v$ with $\mathcal{O}(M_o^2)$ operations. Hence, the rate evaluation yields the complexity of $\mathcal{O}(S M_o^2)$.
\subsection{Complexity of Algorithm~\ref{alg:AO}}
\subsubsection{Variable and Constraint Dimensions}
In order to characterize the computational complexity of the subproblem in~\eqref{prob:V_SOC}, we explicitly define the dimensions of the optimization variables and the associated constraints:
\begin{equation}
\label{nms}
n_{\mathrm{var}}=M_o^2 + 2S,~n_{\mathrm{cone}}=M_o^2 + 3S,~n_{\mathrm{lin}}=M_o+1+3S,
\end{equation}
where $n_{\mathrm{var}}$ is the number of decision variables, $n_{\mathrm{cone}}$ is the total conic dimension, and $n_{\mathrm{lin}}$ is the number of linear inequality constraints.
\subsubsection{Interior-Point Complexity of Solving~\eqref{prob:V_SOC}}
A generic primal-dual interior-point step for mixed LMI/SOC problems incurs complexity of~\cite{boyd}
\begin{equation}
\label{lmisoc}
\mathcal{O}\Big(n_{\mathrm{var}}^3+n_{\mathrm{var}}^2 n_{\mathrm{cone}}+n_{\mathrm{var}} n_{\mathrm{cone}}^2+n_{\mathrm{var}}^2 n_{\mathrm{lin}}\Big).
\end{equation}
Substituting~\eqref{nms} to~\eqref{lmisoc}, the computational complexity $\text{Cost}_{\text{sub}}$ of solving~\eqref{prob:V_SOC} with $K_{\mathrm{IP}}$ steps is given by
\begin{equation}
\label{nik}
\begin{aligned}
&\mathcal{O}\Big(K_{\mathrm{IP}}\Big(
(M_o^2+2S)^3+(M_o^2+2S)^2(M_o^2+3S)\\
&~~~~~~~~~~~+(M_o^2+2S)(M_o^2+3S)^2\\
&~~~~~~~~~~~+(M_o^2+2S)^2(M_o+1+3S)
\Big)
\Big)\\
&= \mathcal{O}(K_{\mathrm{IP}}(M_o^2+S)^3).
\end{aligned}
\end{equation}
\subsubsection{Gaussian Randomization}
The Cholesky factorization of $\mathbf V_0$ requires $\mathcal{O}(M_o^3)$ flops and is performed once~\cite{chol}. For each $\mathbf z_n\sim\mathcal{CN}(\mathbf 0,\mathbf I_{M_o})$, $\mathbf v_n=\mathbf L\mathbf z_n$ is generated with complexity $\mathcal{O}(M_o^2)$. Each $\mathbf v_n$ is then projected onto the feasible set by~\eqref{eq:mag_projection_re} and~\eqref{eq:power_scaling}, dominated by latter one with $\mathcal{O}(M_o^2)$ operations. For each feasible $\mathbf v_n$, we form $\mathbf V_n=\mathbf v_n\mathbf v_n^*$ and evaluate $\tilde R_n$ by computing $\{\mathrm{tr}(\mathbf C_s\mathbf V_n),\mathrm{tr}(\mathbf a_s\mathbf a_s^*\mathbf V_n)\}_{s=1}^S$ in the objective of~\eqref{prob:Vopt_QCQP}, which requires $\mathcal{O}(S M_o^2)$ operations; hence, the cost per randomized candidate is $\mathcal{O}(S M_o^2)$. Consequently, the overall computational complexity is
\begin{equation}
\label{grcom_new}
\mathcal{O}\Big(
M_o^3
+
N_{\mathrm{rand}}S M_o^2
\Big).
\end{equation}
\subsubsection{Auxiliary Update}
Updating $y_s$ via~\eqref{eq:y_update_SOC} requires $\mathrm{tr}(\mathbf C_s\mathbf V)$ with complexity of $\mathcal{O}(SM_o^2)$.

Hence, %by adding the complexity of evaluating $\bar R_S$ by Section~\ref{saae},
the total complexity of Algorithm~\ref{alg:AO} with $T_v$ iterations is given by
\begin{equation}
\label{cost1}
\text{Cost}_{\mathbf v\text{-update}}
=\mathcal{O}\Big(T_v(K_{\mathrm{IP}}(M_o^2+S)^3+N_{\mathrm{rand}}(S M_o^2 + K_{\mathrm{IP}}' S^3))\Big).
\end{equation}
\subsection{Complexity of Algorithm~\ref{alg:CEM_ports}}
\subsubsection{Sampling and Rate Evaluation}
At iteration $t$ of Algorithm~\ref{alg:CEM_ports}, we draw $N_{\mathrm{mc}}$ i.i.d. $\{\boldsymbol{\zeta}_n^{[t]}\}$. Since generating one such vector requires $M$ independent Bernoulli trials, the total sampling complexity is $\mathcal{O}(N_{\mathrm{mc}} M)$.

For each $\boldsymbol{\zeta}_n^{[t]}$, we evaluate $\{\bar R_S(\boldsymbol\zeta_n^{[t]})\}_{n=1}^{N_{\mathrm{mc}}}$. Thus, by the result in Section~\ref{saae}, the total complexity of evaluating all $N_{\mathrm{mc}}$ samples is $\mathcal{O}(N_{\mathrm{mc}} S M_o^2)$.
\subsubsection{Elite Selection and Statistics}
Thereafter, we select the top-$\rho$ fraction of samples as $\mathcal E^{[t]}$. Sorting the $N_{\mathrm{mc}}$ fitness values requires $\mathcal{O}(N_{\mathrm{mc}}\log N_{\mathrm{mc}})$, which is optimal for comparison-based sorting. Once $\mathcal{E}^{[t]}$ of size $\lceil \rho N_{\mathrm{mc}}\rceil$ is determined, $\mu_i^{[t]}$ in~\eqref{eq:mu_def} must be computed. Each $\mu_i^{[t]}$ requires summing the $i$th entries of $N_e$ binary vectors of length $M$. Therefore, the overall complexity of computing all $M$ components of $\boldsymbol{\mu}^{[t]}$ is $\mathcal{O}(\rho N_{\mathrm{mc}} M)$.
\subsubsection{Solving for $\nu^\dagger$}
Each evaluation of $g(\nu)$ costs $\mathcal{O}(M)$, and bisection with $L_\nu$ steps gives $\mathcal{O}(L_\nu M)$.

Hence, the total complexity of Algorithm~\ref{alg:CEM_ports} with $T_{\mathrm{CEM}}$ iterations is given by
\begin{equation}
\begin{aligned}
\text{Cost}_{\mathbf S\text{-update}}=\mathcal{O}\Big(T_{\mathrm{CEM}}(&N_{\mathrm{mc}} S M_o^2+N_{\mathrm{mc}} M+N_{\mathrm{mc}}\log N_{\mathrm{mc}}\\
&+L_\nu M)\Big).
\end{aligned}
\end{equation}
\subsection{Overall AO Complexity}
Combining the results, we get the total complexity of the proposed AO framework with $T_{\mathrm{AO}}$ iterations as
\begin{equation}
\label{totcom}
\begin{aligned}
\text{Cost}_{\text{total}}=\mathcal{O}\Big(T_{\mathrm{AO}}\Big(\text{Cost}_{\mathbf v\text{-update}}+\text{Cost}_{\mathbf S\text{-update}}+SM_o^2\Big)\Big).
\end{aligned}
\end{equation}
\begin{figure}[t]
	\begin{center}
		\includegraphics[width=0.7\columnwidth,keepaspectratio]%
		{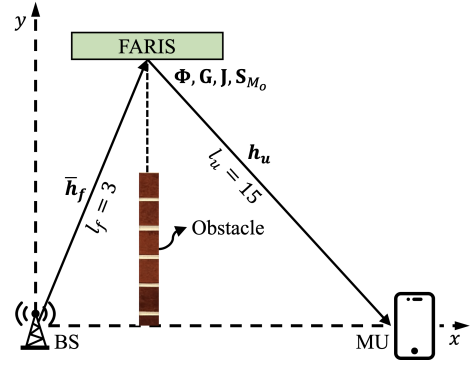}
		\caption{Simulation setup of the FARIS-aided system.}
		\label{fig_sim}
	\end{center}
\end{figure}

\section{Simulation Results}\label{simrs}
In the simulations, we assess the effectiveness of the proposed AO frameworks for maximizing $\bar{R}$ through Monte-Carlo evaluations. The FARIS setup, as shown in Fig.~\ref{fig_sim}, considers a deployment of the BS, FARIS, and MU, with distances $l_f = 3~\mathrm{m}$ and $l_u = 15~\mathrm{m}$. Since $l_f = 3~\mathrm{m}$ is smaller than $D^{\mathrm{RB}} = 6~\mathrm{m}$, the LoS condition for $\tilde{\mathbf{h}}_f$ is satisfied~\cite{nearris}. %The LoS angles are randomly drawn from $\mathcal{U}[0, \pi]$, and all 
The channels between the ARIS and MU are modeled as flat Rician fading channels, with a path-loss exponent of 2.2 and $K=1$~\cite{FASRIS22, arisover}, and the detailed simulation parameters are listed in Table~\ref{tabsim}. Each Monte-Carlo point is computed by averaging over $10^3$ independent realizations of signal transmission and reception. For comparison, we also include the performance of (i) a conventional FRIS (passive) architecture with the same number of $M_o$ FRIS elements selected from $M$ candidates with the particle-swarm-optimization (PSO)-based approach in~\cite{FRISlook}, and (ii) an ARIS architecture with same $M_o$ elements with sequential convex approximation (SCA)-based optimization in~\cite{aris5}. We additionally compare our results with a BFS baseline that exhaustively evaluates every possible port-phase configuration of the FRIS, assuming a $b=2$-bit discrete phase quantization and one-dimensional search over $[0, g_\max]$~\cite{FRISonoff}.

\begin{table}[t]
\centering
\caption{Simulation Parameters}
\label{tabsim}
\begin{tabular}{>{\centering}m {5.2cm}|>{\centering}m {2.4cm}}
\hline
\textbf{Parameter} & \textbf{Value} 
\tabularnewline
\hline
Number of FARIS/FRIS elements $M$\\(unless referred) & 100~($10\times 10$)
\tabularnewline
\hline
Maximum amplification gain $g_{\max}$ &  40 [dB]~\cite{aris5}
\tabularnewline
\hline
Hardware power consumption $(P_c, P_{\mathrm {DC}})$ & (-10, -5) [dBm]~\cite{aris5}
\tabularnewline
\hline
Total power budget of FARIS $P_{\max,t}$ & 25 [dBm]~\cite{aris5}
\tabularnewline
\hline
Transmit power $P$ (unless referred) & 15 [dBm]
\tabularnewline
\hline
Noise at FARIS and MU $(\sigma_r^2, \sigma_0^2)$ & -90 [dBm] (same)
\tabularnewline
\hline
Normalized FARIS/FRIS aperture $W_x$\\(unless referred) & 2
\tabularnewline
\hline
Parameters in CEM $(N_{\mathrm{mc}}, \rho, \omega)$ & $(5N, 0.1, 0.7)$~\cite{CEM}
\tabularnewline
\hline
\end{tabular}
\end{table}

\begin{figure*}[t]
	\begin{center}
		\includegraphics[width=1.7\columnwidth,keepaspectratio]%
		{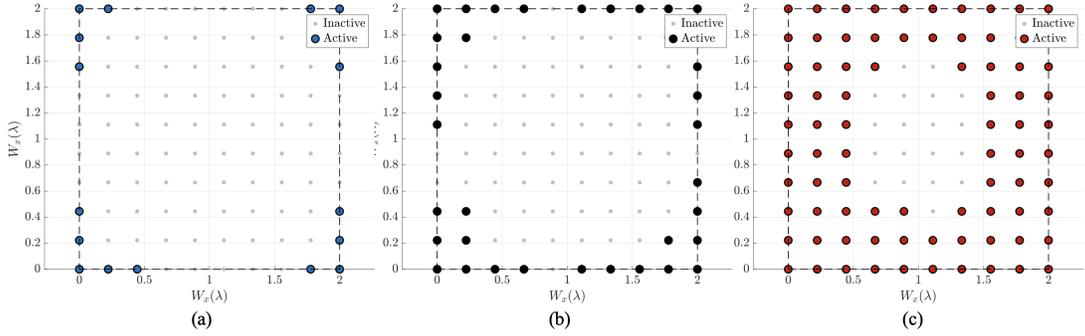}
		\caption{Illustration of examples of the optimized positions when the FARIS has (a) $M_o=16$ (b) $M_o=36$ (c) $M_o=81$ fluid active elements.}
		\label{fig_ex}
	\end{center}
\end{figure*}

Fig.~\ref{fig_ex} depicts the preset position candidates for the FARIS when employing $M_o = 16,~36$ and $81$ fluid elements, which show the inherent flexibility of the FARIS architecture. For each case, the most advantageous $\mathbf S_{M_o}^\star$ is determined by CEM-based Algorithm~\ref{alg:CEM_ports}. {The resulting element selections show a clear tendency to favor ports located near the boundary of the array rather than those concentrated around the center. This phenomenon can be interpreted from the perspective of the effective aperture and spatial diversity provided by FARIS. In particular, elements positioned farther from the array centroid typically experience larger angular separation and more diverse propagation geometries. As a result, they tend to exhibit lower spatial correlation and contribute more effectively to enlarging the achievable diversity and DoF through the optimized configuration of $\mathbf G$.}

\begin{figure}[t]
	\begin{center}
		\includegraphics[width=0.8\columnwidth,keepaspectratio]%
		{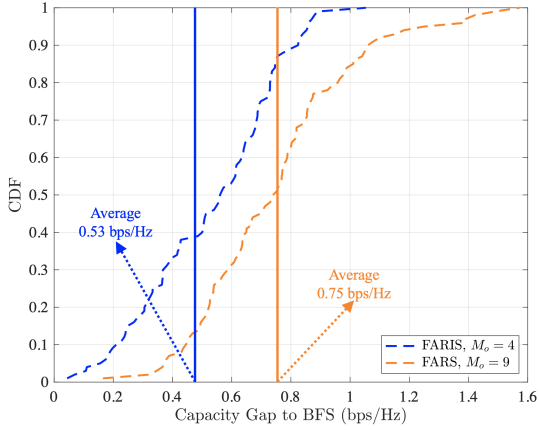}
		\caption{CDF of ergodic rate gap between BFS and proposed AO frameworks for different $M_o$ with $M=16$.}
		\label{fig_cdf}
	\end{center}
\end{figure}

Fig.~\ref{fig_cdf} depicts the cumulative distribution function (CDF) of the ergodic-rate gap, together with that of the BFS benchmark, for $M=16$ and $M_o\in\{4,9\}$. Notably, the FARIS equipped with the proposed AO framework achieves near-optimal performance, closely tracking the exhaustive BFS curve despite the dramatic reduction in computational burden. The average gap remains as small as 0.48 and 0.69~bps/Hz for $M_o=4$ and $9$, respectively, a practically negligible loss considering that BFS requires evaluating the entire discrete port-phase space and optimal amplification control, whereas our method converges rapidly with orders-of-magnitude lower complexity. This highlights the efficiency and scalability of the proposed AO framework, even under stringent configuration constraints.

\begin{figure}[t]
	\begin{center}
		\includegraphics[width=0.8\columnwidth,keepaspectratio]%
		{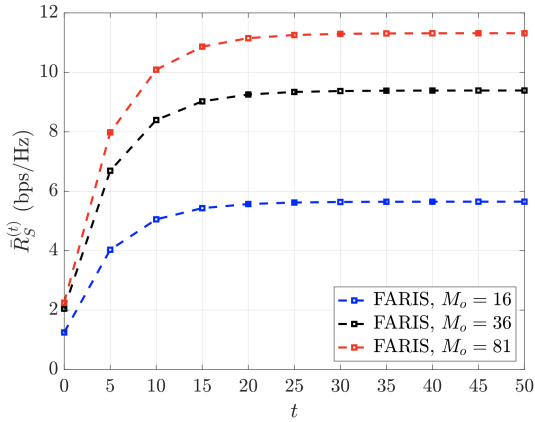}
		\caption{Convergence performance of $\bar R_S^{(t)}$ by the proposed AO framework for different $M_o$.}
		\label{fig_algo}
	\end{center}
\end{figure}
Fig.~\ref{fig_algo} illustrates the convergence behavior of $\bar R_S^{(t)}$ with the proposed AO framework for different values of $M_o$. The proposed algorithm shows a monotonic increase in the objective and typically converges within 15-20 iterations. The convergence speed is acceptable even though FARIS has more control variables (amplification + phase + port selection) than the FRIS baseline, demonstrating its practicality for real-time implementation in FARIS. Furthermore, as $M_o$ increases, convergence becomes slightly slower due to the enlarged search space and stronger spatial correlation, which lessen the marginal gain achieved at each iteration~\cite{curse}.

\begin{figure}[t]
	\begin{center}
		\includegraphics[width=0.8\columnwidth,keepaspectratio]%
		{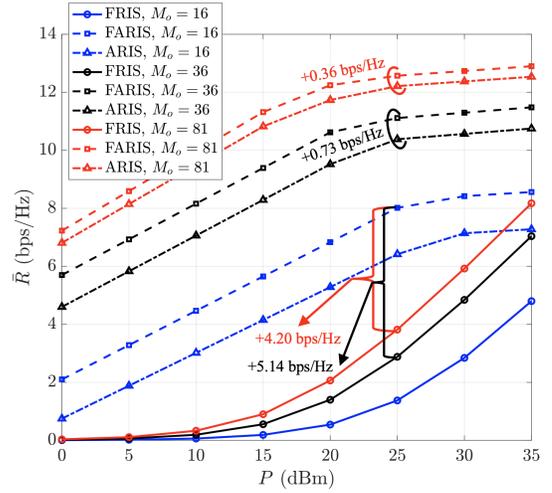}
		\caption{Achievable rate $\bar R$ versus Tx power $P$ for different $M_o$ FARIS/FRIS elements.}
		\label{fig_po}
	\end{center}
\end{figure}

Fig.~\ref{fig_po} presents $\bar R$ versus $P$ for FARIS and conventional FRIS/ARIS under $M_o=16,~36$ and $81$. As $P$ increases, FARIS with the proposed AO framework and ARIS initially exhibit a steeper rate growth; however, after certain threshold of $P$, the hardware constraint~\eqref{eq:Pris_constraint} in~\eqref{prob:maxrate} becomes active, causing both curves to transition into a more moderate scaling with respect to $P$~\cite{aris5}. Consequently, the optimization gain of FARIS over ARIS also diminishes slightly in this high-power regime, as the ARIS amplification is limited by the same hardware bound. Nevertheless, FARIS still preserves a substantial rate margin over FRIS/ARIS, even with small $M_o$, as its fluid positioning and active amplification enable more effective exploitation of the aperture and yield sustained rate gains over the counterparts; at $P=25$~dBm, the FARIS with $M_o=36$ and $81$ yields 0.36 and 0.73~bps/Hz higher rate compared to ARIS architecture. Furthermore, with same $P=25$~dBm, the FARIS configuration with $M_o=16$ attains an average rate of approximately 7.97~bps/Hz, whereas the FRIS setup, despite employing a much larger aperture, achieves no more than 3.93~bps/Hz at $M_o=81$, meaning that FARIS provides a 4.04~bps/Hz higher rate. This performance gap widens further for $M_o=36$, making the rate of FARIS 4.98~bps/Hz higher. Moreover, increasing $M_o$ improves $\bar R$ for every architecture; however, the marginal gain decreases for larger $M_o$ with fixed $M$, reflecting a saturation due to dense spatial sampling shown in Fig.~\ref{fig_ex}.
\begin{figure}[t]
	\begin{center}
		\includegraphics[width=0.8\columnwidth,keepaspectratio]%
		{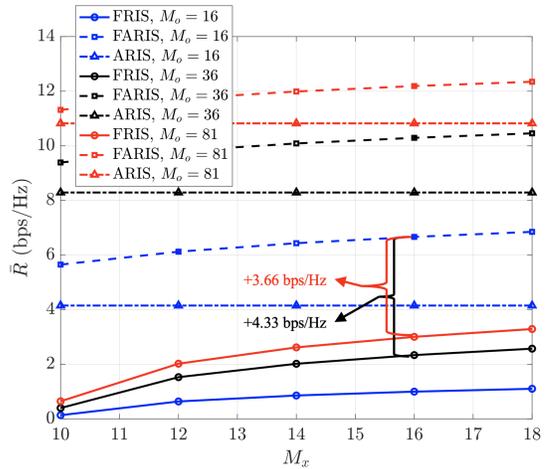}
		\caption{Achievable rate $\bar R$ versus the number of total elements $M~(M_x\times M_x)$ for different $M_o$ FARIS/FRIS elements.}
		\label{fig_mx}
	\end{center}
\end{figure}

Fig.~\ref{fig_mx} depicts the variation of $\bar R$ with respect to $M$ (or equivalently $M_x$) in the FARIS and conventional FRIS/ARIS configuration. For any fixed M, FARIS equipped with the proposed AO framework consistently achieves a higher rate than FRIS/ARIS, even with less $M_o$; when $M_x = 16$, the proposed FARIS configuration with $M_o = 16$ provides an additional 3.53~bps/Hz over FRIS with $M_o=81$, which increases to 4.20~bps/Hz when $M_o=36$, and the gain over ARIS, whose fixed architecture yields a nearly flat rate profile due to the absence of extra DoFs or adaptive optimization, becomes readily apparent in the figure. This again highlights the substantial rate gain enabled by both fluid repositioning and active amplification. As $M$ increases, both FARIS/FRIS architectures benefit from the denser aperture, which enhances spatial sampling and increases the likelihood of selecting more favorable element positions. Despite this common scaling benefit, FARIS maintains a clear performance margin over FRIS, as its fluid port placement together with active amplification enables more effective utilization of the expanded aperture, thereby providing rate enhancements unattainable by the purely passive counterpart.
\begin{figure}[t]
	\begin{center}
		\includegraphics[width=0.8\columnwidth,keepaspectratio]%
		{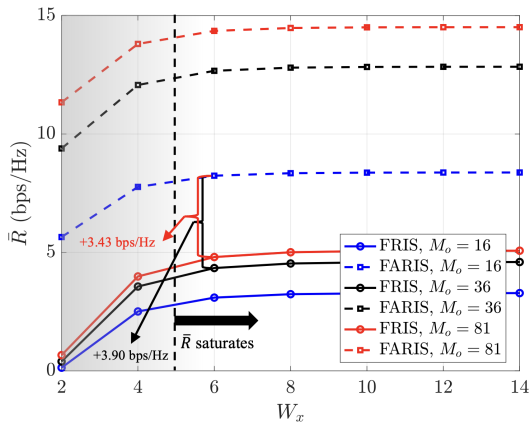}
		\caption{Achievable rate $\bar R$ versus normalized aperture $W_x$ for different $M_o$ FARIS/FRIS elements.}
		\label{fig_wx}
	\end{center}
\end{figure}

Fig.~\ref{fig_wx} illustrates $\bar R$ as a function of $W_x$ for both FARIS and conventional FRIS. In the initial growth region, all schemes, including the proposed AO-enabled FARIS and the conventional FRIS, exhibit a sharp increase in $\bar R$ due to the rapid reduction in spatial correlation and the improved ability to select favorable port positions as the aperture expands. This effect is also considerably pronounced for FARIS, similar to the previous observation, so that even a FARIS design with a smaller element count (e.g., $M_o=16$) can surpass the rate of a larger-aperture FRIS configuration (e.g., $M_o=36,~81$). As $W_x\gtrapprox5$, which corresponds to spacing of $\frac{\lambda}{2}$, the rate improvement begins to saturate since the additional aperture provides only marginal decorrelation benefits~\cite{perlimfas}. Notably, in this region, FARIS exploits the enlarged aperture far more effectively than FRIS, thanks to its jointly optimized fluid positioning and active amplification. Therein with $W_x=6$, the rate difference between FARIS with $M_o=16$ and FRIS when $M_o=81$ is 3.27~bps/Hz, which further widens to 3.73~bps/Hz when the number of active FRIS elements is reduced to $M_o = 36$. Although all schemes eventually show saturation after $W_x\gtrapprox 5$, FARIS maintains a consistently larger gain throughout the entire range of $W_x$.

\begin{figure}[t]
	\begin{center}
		\includegraphics[width=0.8\columnwidth,keepaspectratio]%
		{fig/pocon22}
		\caption{Total power consumption versus $M_o$ for the proposed FARIS and conventional ARIS and AF relay architectures.}
		\label{fig_pocon}
	\end{center}
\end{figure}

%Fig.~\ref{fig_pocon} presents the total power consumption as a function of $M_o$ for both FARIS and conventional ARIS. As observed, the total power consumption increases monotonically with $M_o$ in both architectures, since selecting a larger number of elements requires additional $P_{\mathrm{DC}}$ and results in a higher $P_{\mathrm{RIS}}$. In contrast to FARIS, the ARIS architecture consistently achieves lower total power consumption across the considered range of $M_o$. This behavior can be attributed to the structural difference between the two architectures. In ARIS, the active elements are directly deployed at the selected $M_o$ locations, leading to a smaller hardware power term $M_o(P_c+P_{\mathrm{DC}})$ as well as a reduced $P_{\mathrm{RIS}}$. On the other hand, FARIS maintains a larger candidate pool with $M$ ports, resulting in a higher hardware power term $MP_c+M_oP_{\mathrm{DC}}$ in~\eqref{eq:Pfaris_total}, and the fluid amplification mechanism can also increase the reflected power component, which was shown in previous figures. Consequently, the total power consumption of FARIS becomes higher than that of ARIS. Nevertheless, the additional power overhead remains moderate, and when combined with the substantial rate improvement provided by FARIS, as illustrated in the previous figures, the overall FARIS architecture remains practically feasible. These results indicate that FARIS achieves significant performance gains in achievable rate while incurring only a limited increase in power consumption compared to ARIS.

{Fig.~\ref{fig_pocon} illustrates the total power consumption versus $M_o$ for the proposed FARIS architecture, conventional ARIS, and AF relay. Herein, the hardware power consumption of the $M_o$-element AF relay is modeled as~\cite{cute}.
\begin{equation}
\label{afpower}
P_{\mathrm{hard,AF}}=M_o(P_{\mathrm{DAC}} + P_{\mathrm{mix}} + P_{\mathrm{filt}}) + P_{\mathrm{syn}},
\end{equation}
where $P_{\mathrm{DAC}}, P_{\mathrm{mix}}, P_{\mathrm{filt}}$, and $P_{\mathrm{syn}}$ denote the power consumption of the digital-to-analog converter (DAC), mixer, filter, and frequency synthesizer, respectively. According to the practical parameters reported in~\cite{cute}, the combined circuit power $P_{\mathrm{DAC}} + P_{\mathrm{mix}} + P_{\mathrm{filt}}$ is approximately $18.2$~dBm, while $P_{\mathrm{syn}}=17$~dBm. This hardware power is then combined with the communication-related power for AF relays~\cite{af1, af3}. As shown in the figure, compared with FARIS, the ARIS architecture consistently exhibits lower total power consumption across the considered range of $M_o$. This is mainly due to the structural difference between the two architectures. In ARIS, the active elements are directly deployed at the selected $M_o$ positions, resulting in a smaller hardware power term $M_o(P_c+P_{\mathrm{DC}})$ as well as a reduced reflected signal power $P_{\mathrm{RIS}}$. In contrast, FARIS maintains a larger candidate pool with $M$ ports, which introduces an additional control-circuit power $MP_c$ in~\eqref{eq:Pfaris_total}. %Moreover, the fluid amplification mechanism can further increase the reflected signal power component. As a result, FARIS consumes slightly more power than ARIS. 
Nevertheless, the power consumption of FARIS remains significantly lower than that of the AF relay. This is because unlike the AF relay nodes that operate as independent transceivers with dedicated radio-frequency (RF) chains and signal forwarding operations~\cite{af1, af3}, whose hardware power in~\eqref{afpower} introduces substantially higher total power, FARIS remains a metasurface-based structure in Fig.~\ref{fig_sche} that directly manipulates the impinging electromagnetic wave over a distributed aperture. Therefore, FARIS provides a favorable intermediate design that achieves notable rate improvements over ARIS while avoiding the excessive power overhead associated with AF relay architectures.}

\section{Conclusion}
%In this paper, we proposed the FARIS concept, a novel fluid-active-RIS architecture that jointly exploits fluid port repositioning and per-port amplification to substantially enhance the performance of RIS-aided wireless network. We developed a correlated FARIS channel model based on Jakes’ spatial correlation and formulated an ergodic-rate maximization problem under a realistic reflection-power constraint, jointly optimizing active reflection coefficients and port selection. To tackle the resulting mixed-integer non-convex program, we introduced an AO framework that alternates between: i) a lifted active-reflection design solved via SAA and a quadratic-transform-based conic optimization in the lifted matrix variable, and ii) an $M_o$-constrained CEM-based port-selection scheme with a closed-form characterization guaranteeing a unique solution and monotonic improvement of the underlying likelihood. Simulation results showed that FARIS with proposed AO framework converges within a modest number of iterations and consistently outperforms a conventional FRIS/ARIS baseline across a wide range of transmit powers, total elements, and apertures, often achieving higher rates even with fewer active elements or smaller apertures. These findings demonstrate that combining fluid repositioning with active amplification provides a powerful additional DoF over FRIS/ARIS designs, positioning FARIS as a promising building block for 6G wireless environments.
In this paper, we proposed the FARIS concept, a novel fluid-active-RIS architecture that jointly exploits fluid port repositioning and per-port amplification to substantially enhance the performance of RIS-aided wireless networks. To capture the practical hardware behavior, we developed a circuit-aware modeling framework and established a correlated FARIS signal model based on Jakes’ spatial correlation. Under a realistic reflection-power constraint, we formulated an ergodic-rate maximization problem that jointly optimizes the active amplification-reflection coefficients and the discrete selection of fluid-active ports. To solve the resulting mixed-integer non-convex problem, we developed an AO-based framework that efficiently alternates between active-reflection optimization and CEM-based port selection, achieving near-optimal performance with significantly reduced computational complexity. Numerical results demonstrate that the proposed FARIS architecture consistently outperforms conventional baselines, achieving higher ergodic rates even with fewer active elements than the benchmark schemes, thereby highlighting the efficiency of the proposed fluid-active design. These results highlight that the synergy between fluid reconfigurability and active amplification introduces a powerful additional spatial DoF, enabling more efficient channel shaping and positioning FARIS as a promising architectural paradigm for 6G.

\section*{Appendix A}
\section*{Proof of Theorem~\ref{thm:reformulation}}
We first prove~\eqref{eq:thm_gams}. From~\eqref{eq:Aact} and~\eqref{eq:thm_buk},
\begin{equation}
\label{eq:thm_signal1}
\mathbf h^{(s)*}\mathbf A_{\mathrm{act}}\tilde{\mathbf h}_f=\mathbf u_s^* \mathrm{diag}(\mathbf v)\mathbf b,
\end{equation}
and the square of magnitude becomes
\begin{equation}
\label{eq:thm_signal2}
\big|\mathbf h^{(s)*}\mathbf A_{\mathrm{act}}\tilde{\mathbf h}_f\big|^2=\mathbf v^* \mathbf a_s \mathbf a_s^* \mathbf v=|\mathbf a_s^*\mathbf v|^2,
\end{equation}
which is the numerator of $\gamma_s$ in~\eqref{eq:rate_SAA}. Thereafter, the ARIS-induced noise term satisfies
\begin{equation}
\begin{aligned}
\mathbf h^{(s)*}\mathbf A_{\mathrm{act}}\mathbf A_{\mathrm{act}}^*\mathbf h^{(s)}&= \mathbf u_s^*\mathrm{diag}(\mathbf v)\mathbf K \mathrm{diag}({\mathbf v^*})\mathbf u_s\\
&=\mathbf v^*\big(\mathbf K \odot (\mathbf u_s\mathbf u_s^*)\big)\mathbf v,
\label{eq:noise_step2}
\end{aligned}
\end{equation}
Thus by~\eqref{eq:thm_signal2} and~\eqref{eq:noise_step2}, the denominator of $\gamma_s$ in~\eqref{eq:rate_SAA} is
\begin{equation}
\label{eq:thm_den}
\sigma_0^2 + \mathbf v^*\mathbf C_s\mathbf v~(\mathbf C_s \triangleq L_u\sigma_r^2(\mathbf K \odot (\mathbf u_s\mathbf u_s^*))),
\end{equation}
and using $\mathbf V=\mathbf v\mathbf v^*$ and $\mathrm{tr}(\mathbf X\mathbf v\mathbf v^*)=\mathbf v^*\mathbf X\mathbf v$, we obtain~\eqref{eq:thm_gams}.

We now prove~\eqref{eq:thm_Pris}. Again by using~\eqref{eq:Aact} and~\eqref{eq:thm_buk},
\begin{equation}
\label{eq:thm_Q1step}
\|\mathbf A_{\mathrm{act}}\tilde{\mathbf h}_f\|_2^2=\mathbf v^*\mathbf Q_1\mathbf v,~\mathbf Q_1
\triangleq\mathrm{diag}(\mathbf b^*)\mathbf K\mathrm{diag}(\mathbf b)=\mathbf K \odot (\overline{\mathbf b}\mathbf b^{\mathrm T}).
\end{equation}
Similarly,
\begin{equation}
\label{eq:thm_Q2}
\begin{aligned}
\mathrm{tr}\left(\mathbf A_{\mathrm{act}}\mathbf A_{\mathrm{act}}^*\right)&=\mathrm{tr}(\mathrm{diag}(\mathbf v^*)\mathbf K\mathrm{diag}(\mathbf v)\mathbf K)\\
&=\mathbf v^*\mathbf Q_2\mathbf v~(\mathbf Q_2 \triangleq \mathbf K\odot \mathbf K).
\end{aligned}
\end{equation}
Therefore, by defining $\mathbf F \triangleq P L_f\mathbf Q_1 + \sigma_r^2\mathbf Q_2$, $P_{\mathrm{RIS}}$ becomes
\begin{equation}
\label{eq:thm_totalvv}
P_{\mathrm{RIS}}=\mathbf v^*\mathbf F\mathbf v=\mathrm{tr}(\mathbf F\mathbf V),
\end{equation}
and the theorem follows. $\blacksquare$
\section*{Appendix B}
\section*{Proof of Theorem~\ref{th1}}
At iteration $t$ of Algorithm~\ref{alg:AO}, we assume that $(\mathbf V^{[t]},\{y_s^{[t]}\}_{s=1}^S)$ has been obtained. By construction of the update rule in~\eqref{eq:y_update_SOC}, the equality holds in the quadratic-transform lower-bound in~\eqref{eq:transformed_obj} at $\mathbf V^{[t]}$, i.e.,
\begin{equation}
\label{rsxi}
\bar R_S^{[t]}
%=
%\frac{1}{S}\sum_{s=1}^S \log_2\bigl(1+\gamma_s^{[t]}\bigr)
=
\frac{1}{S}\sum_{s=1}^S \log_2\bigl(1+\xi_s^{[t]}\bigr),
\end{equation}
%where $\gamma_s^{[t]}$ and $\xi_s^{[t]}$ denotes $\gamma_s$ and $\xi_s$ evaluated at $\mathbf V^{[t]}$ and $(\mathbf V^{[t]},\{y_s^{[t]}\})$, respectively.
where $\xi_s^{[t]}$ denotes $\xi_s$ evaluated at $(\mathbf V^{[t]},\{y_s^{[t]}\})$. Fixing $\{y_s^{[t]}\}$, we solve~\eqref{prob:V_SOC} to obtain $\mathbf V^{[t+1]}$. Since $\mathbf V^{[t+1]}$ maximizes the lower-bound in~\eqref{eq:transformed_obj} for fixed $\{y_s^{[t]}\}$, it holds that
\begin{equation}
\label{eq:qt_step1_new}
\frac{1}{S}\sum_{s=1}^S \log_2\bigl(1+\bar{\xi}_s^{[t+1]}\bigr)\ge\frac{1}{S}\sum_{s=1}^S \log_2\bigl(1+\xi_s^{[t]}\bigr),
\end{equation}
where $\bar\xi_s^{[t+1]}$ denotes $\xi_s$ evaluated at $(\mathbf V^{[t+1]},\{y_s^{[t]}\})$. Then $\{y_s^{[t+1]}\}$ are updated via~\eqref{eq:y_update_SOC}, which is the unique maximizer of the quadratic-transform identity in~\eqref{eq:quad_transform_identity}. Therefore, the equality holds in the lower-bound in~\eqref{eq:transformed_obj} at $\mathbf V^{[t+1]}$, i.e.,
\begin{equation}
\label{rxsii}
\bar R_S^{[t+1]}=\frac{1}{S}\sum_{s=1}^S 
\log_2\bigl(1+\xi_s^{[t+1]}\bigr)\ge\frac{1}{S}\sum_{s=1}^S \log_2\bigl(1+\bar{\xi}_s^{[t+1]}\bigr).
\end{equation}
Hence, by combining~\eqref{rsxi}-\eqref{rxsii}, we obtain
\begin{equation}
\label{monor_new}
\bar R_S^{[t+1]}
=
\frac{1}{S}\sum_{s=1}^S \log_2\bigl(1+\xi_s^{[t+1]}\bigr)
\ge
\frac{1}{S}\sum_{s=1}^S \log_2\bigl(1+\xi_s^{[t]}\bigr)
=
\bar R_S^{[t]}.
\end{equation}
Thus, $\{\bar R_S^{[t]}\}$ monotonically non-decreases, and the theorem follows. $\blacksquare$
\section*{Appendix C}
\section*{Proof of Theorem~\ref{lem:CEM_KKT}}
The Lagrangian of~\eqref{prob:CE_main} is
\begin{equation}
\label{lag}
\mathcal L(\mathbf p,\nu)=\sum_i\big(\mu_i^{[t]}\log p_i+(1-\mu_i^{[t]})\log(1-p_i)\big)+\nu\big(M_o-\sum_i p_i\big).
\end{equation}
Stationarity yields $\frac{\mu_i^{[t]}}{p_i}-\frac{1-\mu_i^{[t]}}{1-p_i}-\nu=0$, which rearranges to~\eqref{eq:quad_eq}. For $\nu=0$, $p_i=\mu_i^{[t]}$ holds. For $\nu\neq0$, the quadratic condition in~\eqref{eq:quad_eq} admits two real roots since $\mu_i^{[t]}\in(0, 1)$:
\begin{equation}
\label{eq:roots_app}
p_i^{(\pm)}(\nu)=\frac{(\nu+1)\pm\sqrt{(\nu+1)^2-4\nu\mu_i^{[t]}}}{2\nu},
\end{equation}
which implies the following two scenarios:
\begin{enumerate}
\item $\nu>0$: The product of $p_i^{(\pm)}(\nu)$ becomes $\frac{\mu_i^{[t]}}{\nu}>0$, so the roots have same signs, and since $f(0, \nu)=\mu_i^{[t]}>0$ and $f(1, \nu)=\mu_i^{[t]}-1<0$, one root lies in $(0,1)$ and the other in $(1,\infty)$. Because the denominator $2\nu>0$, $p_i^{(+)}>p_i^{(-)}$ holds. Therefore, $p_i^{(+)}>1$ while $p_i^{(-)}\in(0,1)$.
\item $\nu<0$: The product of $p_i^{(\pm)}(\nu)$ becomes $\frac{\mu_i^{[t]}}{\nu}<0$, so the roots have opposite signs. Again, $f(0, \nu)=\mu_i^{[t]}>0$ and $f(1, \nu)=\mu_i^{[t]}-1<0$ imply that the positive root lies in $(0,1)$ and the negative one is less than 0. Since the denominator $2\nu<0$, $p_i^{(-)}>p_i^{(+)}$ holds. Therefore, $p_i^{(-)}\in(0,1)$ still holds.
\end{enumerate}
Hence, $p_i^{(-)}(\nu)\in(0, 1)$ holds for all $\nu\neq0$, and we denote it simply by $p_i(\nu)$ as in~\eqref{eq:p_of_nu}.

To prove strictly decreasing, define $f(p_i,\nu)\triangleq\nu p_i^2 - (\nu + 1)p_i + \mu_i^{[t]} = 0$. By the implicit function theorem:
\begin{equation}
\frac{\partial p_i}{\partial \nu}
= -\frac{\frac{\partial f_i}{\partial \nu}}{\frac{\partial f_i}{\partial p_i}}
= -\frac{p_i^2 - p_i}{2\nu p_i - (\nu + 1)}.
\label{eq:dpdnu_app}
\end{equation}
Because $p_i\in(0,1)$, we have $p_i^2 - p_i = p_i(p_i - 1)<0$, so the numerator in~\eqref{eq:dpdnu_app} is negative. For denominator, by letting $\Delta\triangleq(\nu+1)^2-4\nu\mu_i^{[t]}$, from~\eqref{eq:p_of_nu} we have $2\nu p_i-(\nu+1) = -\sqrt{\Delta}<0$. Combining these signs gives
\begin{equation}
\label{ctsg}
\frac{\partial p_i}{\partial \nu}
= -\frac{\text{(negative)}}{\text{(negative)}} < 0,
\end{equation}
which proves that $p_i(\nu)$ is strictly decreasing in~$\nu$.

Continuity of $p_i(\nu)$ follows directly from its closed-form expression in~\eqref{eq:p_of_nu}, which implies that $g(\nu)\triangleq\sum_i p_i(\nu)$ is also strictly decreasing and continuous. Moreover, as $\nu\to -\infty$ leads to $p_i(\nu)\to 1$ so $g(\nu)\to M$, and as $\nu\to +\infty$, $p_i(\nu)\to 0$, so $g(\nu)\to 0$. Hence by the intermediate value theorem, for any target $M_o\in(0,M)$, there exists a unique $\nu^\dagger$ satisfying $g(\nu^\dagger)=M_o$, and the theorem follows. $\blacksquare$

\bibliographystyle{IEEEtran}
\bibliography{IEEEexample}

\end{document}